\newif\if@restonecol
\newcommand{\Ord}{\ensuremath{\mathcal{O}}}
\newtheorem{theorem}{Theorem}[section]
\newtheorem{corollary}[theorem]{Corollary}
\newtheorem*{rep@theorem}{\rep@title}
\newcommand{\newreptheorem}[2]{%
\newenvironment{rep#1}[1]{%
 \def\rep@title{#2 \ref{##1}}%
 \begin{rep@theorem}}%
 {\end{rep@theorem}}}
\renewcommand\vert{\, | \,}
\renewcommand\Pr{\mathrm{Pr}}
\newcommand\E{\mathrm{E}}
\newcommand\am{\mathbf{A}}
\newcommand\bm{\mathbf{B}}
\newcommand\xm{\mathbf{X}}
\newcommand\dm{\mathbf{D}}
\begin{document}

\begin{frontmatter}
\title{Design and analysis of experiments in networks: Reducing bias from interference}
\runtitle{Design and analysis of experiments in networks}

\begin{aug}
\author{\fnms{Dean} \snm{Eckles}}
\author{\fnms{Brian} \snm{Karrer}}
\and
\author{\fnms{Johan} \snm{Ugander}\thanksref{t1}}

\affiliation{Facebook, Inc., Facebook, Inc., and Cornell University}

\runauthor{Eckles, Karrer \& Ugander}

\end{aug}




\begin{abstract}
Estimating the effects of interventions in networks is complicated when
the units are interacting, such that the outcomes for one unit may depend on
the treatment assignment and behavior of many or all other units (i.e., there is interference).
When most or all units are in a single connected component, 
it is impossible to directly experimentally compare outcomes
under two or more global treatment assignments
since the network can only be observed under a single assignment.
Familiar formalism, experimental designs, and analysis methods assume the
absence of these interactions, and result in biased estimators of causal effects of interest.
While some assumptions can lead to unbiased estimators, these assumptions are generally unrealistic,
and we focus this work on realistic assumptions.
Thus, in this work, we evaluate methods
for designing and analyzing randomized experiments
that aim to reduce this bias and thereby reduce overall error.
In \emph{design}, we consider the ability to perform random assignment
to treatments that is correlated in the network, such as through graph cluster
randomization.
In \emph{analysis}, we consider incorporating information about the
treatment assignment of network neighbors.
We prove sufficient conditions for bias reduction through both design and analysis
in the presence of potentially global interference.
Through simulations of the entire process of experimentation in networks,
we measure the performance of these methods under varied network structure
and varied social behaviors, 
finding substantial bias and error reductions.
These improvements are largest for networks with more clustering and data generating processes
with both stronger direct effects of the treatment and stronger interactions between units.
\end{abstract}

\begin{keyword}
\kwd{causal inference}
\kwd{field experiments}
\kwd{design of experiments}
\kwd{peer effects}
\kwd{social contagion}
\kwd{social network analysis}
\kwd{graph partitioning}
\end{keyword}
\thankstext{t1}{Authors are listed alphabetically.}

\end{frontmatter}

\section{Introduction}

\noindent
Many situations and processes of interest to scientists involve individuals
interacting with each other, such that causes of the behavior of one individual are 
also indirect causes of the behaviors of other individuals; that is, there are \emph{peer effects}
or \emph{social interactions} \citep{manski_economic_2000}.
Likewise, in applied work, the policies considered by decision-makers often have 
many of their effects through the interactions of individuals. 
Examples of such cases are abundant. In online social networks, the behavior of 
a single user explicitly and by design affects the experiences of other users in the network. 
If an experimental treatment changes a user's behavior, then it is reasonable 
to expect that this will have some effect on their friends, 
a perhaps smaller effect on their friends of friends, and so on out through the network.  
In an extreme case, treating one individual could alter the behavior of everyone in the network.

To see the challenges this introduces, consider what is, in many cases, a primary quantity of interest for experiments in networks --- the average treatment effect~(ATE) of applying a treatment to all units compared with applying a different (control) treatment to all units.\footnote{For example, \citet{bond_massive_2012} consider the effect of a voter mobilization intervention, such that the aim is to compare voter turnout if everyone (or almost everyone) is assigned to the treatment with turnout if everyone is assigned to the control.
There are other causal quantities that may be of interest, which we do not treat here.
Other authors consider decompositions of effects into various direct and indirect effects of the treatment \citep{sobel_randomized_2006,tchetgen_causal_2012,toulis_estimation_2013}.
}
Let $Z$ be a vector of length $N$ giving each unit's treatment assignment, so that $Y_i(Z = z)$ is the potential outcome of interest for unit $i$ when $Z$ is set to $z$. Then the ATE is a contrast between two such treatment vectors,
\begin{equation}
\label{tau_is_ate}
\tau(z_1, z_0) = \frac{1}{N} \sum_i \E[Y_i(Z = z_1) - Y_i(Z = z_0)],
\end{equation}
where $N$ is the number of units and $z_1$ and $z_0$ are two treatment assignments vectors;
the prototypical case has $z_1 = 1$ and $z_0 = 0$, the vectors of all ones and of all zeros.
Note that each unit's potential outcome is a function of the global treatment assignment vector $Z$,
not just its own treatment $Z_i$.
Additional assumptions will thus be required for $\tau$ to be identifiable.\footnote{This is closely connected to what \citet{holland_causal_1988} regards as the fundamental problem of causal inference --- that one can only observe a unit's response under a single treatment. The difference is that here we can only observe \emph{all} units' responses under a single global treatment.
}

The standard approach is to assume that each unit's response is not affected by the treatment
of any other units.
Versions of this assumption are sometimes called
the \emph{stable unit treatment value assumption} \citep[SUTVA; ][]{rubin_estimating_1974}
or a \emph{no interference} \citep{cox_planning_1958} assumption.
Combined with random assignment to treatment, this suffices to identify $\tau$.
However, for many processes and situations of interest the units are interacting,
and SUTVA becomes implausible \citep{aronow_estimating_2011,sobel_randomized_2006}.

Rather than substituting other strong assumptions about interference,
this paper considers how we can reduce bias for the ATE through
both the choice of experimental design and analysis
when interactions among units occur along an observed network.\footnote{While we limit the analysis here to cases where the measured network and the network through which the interaction occur are the same, the methods examined here may also substantially reduce bias in when using a network observed with error.
}
The design of the experiment 
dictates how each vertex in the network (i.e., unit) is assigned to a condition, 
and the analysis says how the observed responses are combined into 
estimates of causal quantities of interest.
We study these methods by 
formalizing the process of experimentation in networks,
proving sufficient conditions for bias reduction through design and analysis,
and running extensive simulations.

We cannot consider all possible designs and analysis,
but limit this work to some relatively general methods for each.
We consider experimental designs that assign clusters of vertices
to the same treatment; this is \emph{graph cluster randomization} \citep{ugander_graph_2013}.
Since the counterfactual situations of interest involve all
vertices being in the same condition, the intuition is that assigning a vertex and 
vertices near it in the network into the same condition, the vertex is 
``closer'' to the counterfactual 
situation of interest.
For analysis methods, we consider methods that define \emph{effective treatments} such that
only units that are effectively in global treatment or global control are used to estimate the ATE.
For example, an estimator for the ATE might only compare units in treatment that are surrounded
by units in treatment with units in control that are surrounded by units in control.
The intuition is 
that a unit that meets one of these conditions is ``closer'' to a 
counterfactual situation of interest.

The rest of the paper is structured as follows.
We briefly review some related work on experiments in networks.
Section \ref{sec:model} presents a model of the process of experimentation in networks,
including initialization of the network, treatment assignment, outcome generation,
and analysis.
This formalization allows us to develop theorems giving sufficient conditions for bias reduction.
To develop further understanding of the magnitude of the bias and error reduction in practice,
Section \ref{sec:simulations} presents simulations using networks generated from
small-world models and then degree-corrected blockmodels.

We find that graph cluster randomization is capable of dramatically reducing bias
compared to independent assignment without adding ``too much'' variance.
The benefits of graph cluster randomization are larger when the network has more local clustering and
when social interactions are strong. 
If social interactions are weak or the network has little local clustering,
then the benefits of the more complex graph-clustered design are reduced.
Finally, we found larger bias and error reductions through design than analysis:
analysis strategies using neighborhood-based definitions of effective treatments does 
further reduce bias,
but often at a substantial cost to precision such that the simple estimators were
preferable in terms of error.
No combination of design and analysis is expected to work well across very different situations,
but these general insights from simulation can be a guide to practical real-world experimentation in the presence of peer effects.
Furthermore, by identifying sufficient conditions for bias reduction, we can understand when
design and analysis changes will at least not increase bias.

\subsection{Related work}
Much of the literature on interference between units focuses on situations where there
are multiple independent groups, such that there are interactions within, but not between, groups
\citep[e.g.,][]{sobel_randomized_2006,rosenbaum_interference_2007, hudgens_toward_2008, tchetgen_causal_2012}.
Some more recent work has examined interference in networks more generally
\citep{aronow_estimating_2011, manski_identification_2013, toulis_estimation_2013, ugander_graph_2013},
where this between-groups independence structure cannot be assumed.

This prior work has largely focused on assuming restrictions on the extent
of interference (e.g., vertices are only affected by the number of neighbors treated)
and then deriving results for designs and estimators motivated by these same assumptions.
\citet{aronow_estimating_2011} give unbiased estimators for ATEs under these assumptions
and derive variance estimators.\footnote{\citet{aronow_estimating_2011} also consider estimating the effects of peer assignment on ego behavior directly, while our primary quantity of interests is the ATE of global treatment versus global control.}
\citet{ugander_graph_2013} show that graph clustered randomization puts more vertices in the
conditions required for these estimators, such that the variance of these estimators is bounded for
certain types of networks.
But, as noted by \citet{manski_identification_2013} and 
as we discuss in Section \ref{sec:implausibility} below,
the very processes expected to produce interference also make
these assumptions implausible.
The present work explicitly considers more realistic data generating processes that 
violate these restrictive assumptions.
That is, in contrast to prior work, we evaluate design and analysis strategies
under conditions other than those under which they have particular desirable
properties (e.g., unbiasedness).
Instead, we settle for reducing bias and error.\footnote{In this regard, the present work is more similar to \citet{toulis_estimation_2013},
which recognizes that available estimators of the quantities of interest will be biased.
}

\section{Model of experiments in networks}
\label{sec:model}
We consider experimentation in networks as consisting of four phases: (i) {\it initialization}, (ii) {\it treatment assignment}, (iii) {\it outcome generation}, and (iv) {\it estimation}. A single run through these phases corresponds to a single instance of the experimental process. Treatment assignment embodies the experimental design, and the estimation phase embodies the analysis of the network experiment. These same phases, shown in Figure~\ref{diagram}, are implemented in our simulations in which we instantiate this process many times.

\begin{figure}[t]
\begin{center}
\includegraphics[width=\textwidth]{./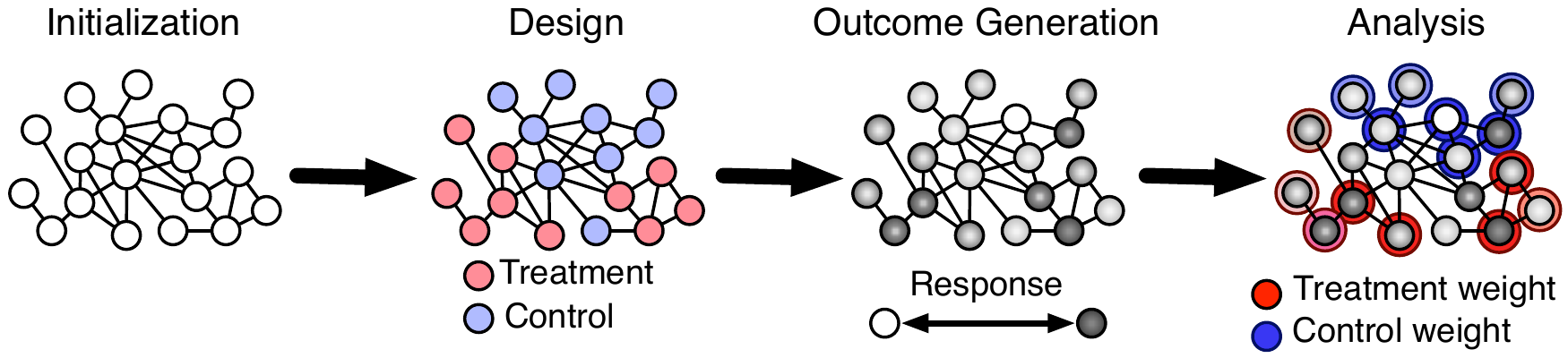}
\caption{Model of the network experimentation process, consisting of 
\emph{(i) initialization}, which generates the graph and vertex characteristics, 
\emph{(ii) design}, which determines the randomization scheme,
\emph{(iii) outcome generation}, which observes or simulates behavior, and
\emph{(iv) analysis}, which constructs an estimator. We examine 
the bias and variance of treatment effect estimators under
different design and analysis methods for varied initialization and outcome
generation processes.}
\label{diagram}
\end{center}
\end{figure}

\subsection{Initialization}
Initialization is everything that occurs prior to the experiment. This includes network formation and the processes that produce vertex characteristics and prior behaviors. In some cases, we may regard this initialization process as random, and so wish to understand design and analysis decisions averaged over instances of this process; for example, we may wish to average over a distribution of networks that corresponds to a particular network formation model. 
In the simulations later in this paper, we generate networks from small-world models \citep{watts_collective_1998}
and degree-corrected blockmodels \citep{karrer_stochastic_2011}.
In other cases, we may regard the outcome of this process as fixed; for example, we may be working with a particular network and vertices with particular characteristics, which we wish to condition on in planning our design and analysis.

When initialization is complete, we have a particular network $G=(V,E)$ with adjacency matrix $\am$.\footnote{For the purposes of this paper, we assume that the network is fixed over the timescale of the experiment.} 
In addition to producing a graph, the initialization process could also produce a collection of vertex characteristics $\xm$ that may or may not relate to the structure of the graph, but may play a role in outcome generation.

\subsection{Design: Treatment assignment}
The treatment assignment phase creates a mapping from vertices to treatment conditions. We only consider a binary treatment here (i.e., an ``A/B'' test), so the mapping is from vertex to treatment or control. Treatment assignment normally involves independent assignment of units to treatments, such that one unit's assignment is uncorrelated with other units' assignments.\footnote{A normal but minor exception occurs when forcing a specific number of units within a block to be assigned to each of treatment and control; this produces negative dependence between units in the same block. This includes global balancing of sample sizes in treatment and control as a special case.
} In this case, each unit's treatment is a Bernoulli random variable
$$
Z_i \sim \text{Bernoulli}(q)
$$
with probability of assignment to the treatment $q$.

The present work evaluates treatment assignment procedures that
produce assignments with network autocorrelation. 
While many methods could produce such network autocorrelation, 
we work with graph cluster randomization, 
in which the network is partitioned into clusters and those clusters are used to assign treatments.
Let the vertices be partitioned into $N_C$ clusters $C_1, C_2, ..., C_{N_C}$, and define $C(\cdot): \{1, ..., N\} \rightarrow \{1, ..., N_C\}$ as mapping vertex indices to cluster indices. Thus $C_i$ refers to a cluster by its index, while $C(i)$ refers to the cluster containing vertex $i$. 

In standard graph cluster randomization, as presented by \citet{ugander_graph_2013}, treatments are assigned at the cluster level, where each cluster $C_j$ is assigned a treatment $W_j \sim \text{Bernoulli}(q)$. Thus the treatments assigned to vertices are simply those assigned to their clusters,
$$
Z_i = W_{C(i)}.
$$
For some estimands and analyses, assigning all vertices in a cluster to the same treatment can make it impossible for some vertices to be observed with, e.g., some particular number of treated peers. This can violate the requirement that all units have positive probability of assignment to all conditions. For this reason, it can be desirable
to use an assignment method that allows for some vertices to be assigned to a different treatment than the rest of its cluster; we describe such a modification in Appendix \ref{appendix:hole_punching}.

Graph cluster randomization could be applied to any mapping $C(\cdot)$ of vertices to clusters. One such mapping, which we use for the simulations reported in this paper, is formed by $\epsilon$-net clustering as previously considered by \citet{ugander_graph_2013}.  An $\epsilon$-net in the graph distance metric is a set of vertices such that no two vertices in the set are less than $\epsilon$ hops of each other, and every vertex outside the set is within $\epsilon$ hops (in fact, $\epsilon-1$ hops) of a vertex in the set.  An $\epsilon$-net can be formed by repeatedly selecting a vertex and removing it and every vertex within distance $\epsilon-1$ from the network, until all vertices have been removed.   Having completed this step, the population of selected vertices forms an $\epsilon$-net.  An $\epsilon$-net clustering can be formed by assigning each vertex to the closest vertex in the $\epsilon$-net, and breaking the possible ties through some arbitrary rule.  Different selection and assignment rules and different values of $\epsilon$ correspond to different experimental designs.  We compare clustered random assignment using $\epsilon$-nets to independent random assignment, where vertices are independently assigned to treatment and control.

Other mappings of vertices to clusters of interest include methods developed for community detection \citep{fortunato_community_2010}. 
Many global community detection methods, such as modularity maximization \citep{newman_modularity_2006},
have a resolution limit such that they do not distinguish small clusters \citep{fortunato_resolution_2007}; 
graph cluster randomization with these methods could then introduce
too large an increase in variance for the resulting bias reduction.
Therefore, local clustering methods may be more appealing for graph cluster randomization.  
Observed community membership (e.g., current educational institution) or geography could also be used as this mapping.

Lastly, it is important to note that independent random assignment can be considered as clustered random assignment where each vertex is in its own cluster.

\subsection{Outcome generation and observation}
\label{observed_outcomes}
Given the network (along with vertex characteristics and prior behavior) 
and treatment assignments, some data generating process 
produces the observed outcomes of interest.
In the context of social networks,
typically this is the unknown process by which individuals make their decisions.
In this work, we consider a variety of such processes.
For our simulations, we use a known process meant to simulate decisions,
in which units respond to others' prior behaviors.
Doing so allows us to understand the performance of varied design and analysis methods, 
measured in terms of estimators' bias and error, 
under varied (although simple) decision mechanisms.
Before considering these processes themselves,
we consider outcomes as a function of treatment assignment.

\subsubsection{Treatment response assumptions}
\label{sec:treatment_response_assumptions}
In the following presentation, we use the language of 
``treatment response'' assumptions developed by \citet{manski_identification_2013}
to organize our discussion of outcome generation. 
Consider vertices' outcomes as determined by a function from the 
global treatment assignment $Z \in \mathbb{Z}^N$ 
and an independent stochastic component
 $U \in \mathbb{U}^N$ to an outcome vector $Y \in \mathbb{Y}^N$:
$$
f(\cdot): \mathbb{Z}^N \times \mathbb{U}^N \rightarrow \mathbb{Y}^N.
$$
We then observe $Y = f(Z, U)$. We can decompose this function into a function for each vertex
$$
f_i(\cdot): \mathbb{Z}^N \times \mathbb{U}^N \rightarrow \mathbb{Y}.
$$
We can, as we have done above, continue to write $Y_i(Z = z)$ to refer to the outcome for vertex $i$ that would be observed under assignment $z$; by suppressing dependence on $U$, this treats $Y_i(\cdot)$ as a stochastic function.

If vertices' outcomes are not affected by others' treatment assignment, then SUTVA is true.
Perhaps more felicitously, \citet{manski_identification_2013} calls this assumption \emph{individualistic treatment response} (ITR).
Under ITR we could then consider vertices as having a function from only their own assignment to their outcome:
$$
f_i(\cdot): \mathbb{Z} \times \mathbb{U}^N \rightarrow \mathbb{Y}.
$$
One way for this assumption to hold is if the vertices do not interact.\footnote{The vertices might interact without necessarily violating the ITR assumption.
This can occur, for example, when vertices interact in one period, and then are affected by treatment assignment, while no longer interacting.
This is why we define $f_i(\cdot)$ as being a function from $\mathbb{U}^N$ rather than just $\mathbb{U}$.
}
This specification of $f_i(\cdot)$ corresponds to the assumption that a vertex's outcome is invariant to changes in other vertices' assignments. That is, for any two global assignments $z_0, z_1 \in \mathbb{Z}^N$ and any stochastic component $U \in \mathbb{U}^N$,
$$
z_{1,i} = z_{0,i} \Rightarrow f_i(z_{1}, U) = f_i(z_{0}, U).
$$
ITR is a particular version of the more general notion of \emph{constant treatment response} (CTR) assumptions \citep{manski_identification_2013}. 
More generally, a CTR assumption involves establishing equivalence classes of treatment vectors by defining a function
$g_i(\cdot): \mathbb{Z}^N \rightarrow \mathbb{G}_i$ that maps global treatment vectors to the space $\mathbb{G}_i$ of \emph{effective treatments} for vertex $i$ \citep{manski_identification_2013}  such that
$$
g_i(z_1) = g_i(z_0) \Rightarrow f_i(z_{1}, U) = f_i(z_{0}, U)
$$
for any two global assignments $z_0, z_1 \in \mathbb{Z}^N$ and any stochastic component $U \in \mathbb{U}^N$.
Specifying the functions $g_i$ is then a general way to specify a CTR assumption. Such assumptions can be described as constituting an \emph{exposure model} \citep{aronow_estimating_2011, ugander_graph_2013}. 

Other CTR assumptions have been proposed that allow for some interference. \citet{aronow_estimating_2011} simply posit different restrictions on this function, such as that a vertex's outcome only depends on its assignment and its neighbors' assignments. This \emph{neighborhood treatment response} (NTR) assumption has that, for any two global assignments $z_0, z_1 \in \mathbb{Z}^N$ and any stochastic component $U \in \mathbb{U}^N$,
$$
z_{1,i} = z_{0,i} \text{ and } z_{1,\delta(i)} = z_{0,\delta(i)} \Rightarrow f_i(z_{1}, U) = f_i(z_{0}, U),
$$
where $\delta(i)$ are the neighbors of vertex $i$. \citet{aronow_estimating_2011} and \citet{ugander_graph_2013} consider further restrictions, such as that a vertex's response only depends on the number of treated neighbors.

\subsubsection{Implausibility of tractable treatment response assumptions}
\label{sec:implausibility}
How should we select an exposure model? \citet[Section 3]{aronow_estimating_2011} suggest that we ``must use substantive judgment to fix a model somewhere between the traditional randomized experiment and arbitrary exposure models''. However, it is unclear how substantive judgement can directly inform the selection of an exposure model for experiments in networks --- at least when the vast majority of vertices are in a single connected component.
Interference is often expected because of peer effects: in discrete time, then the behavior of a vertex at $t$ is affected by the behavior of its neighbors at $t - 1$; if this is the case, then the behavior of a vertex at $t$ would  also be affected by the behavior of its neighbors' neighbors at $t - 2$, and so forth. Such a process will result in violations of the NTR assumption, and many other assumptions that would make analysis tractable. \citet{manski_identification_2013} shows how some, quite specific, models of simultaneous endogenous choice can produce some restrictions on $f_i(\cdot)$.\footnote{\citet{manski_identification_2013} calls these models of simultaneous endogenous choice a ``system of structural equations''. But because these equations are simultaneous, they are not structural in the sense of corresponding to a directed acyclic graph (DAG) given a causal interpretation \citep{pearl_causality:_2009}. However, we can regard these equations as specifying an equilibrium that arises out of some unknown dynamic process. We prefer to work with a posited dynamic process, which may or may not be in equilibrium when we observe it \citep[cf.][]{young_individual_1998}.}

Since many appealing CTR assumptions are violated by the very theories that motivate the expectation of interference, it is useful to evaluate the performance of available design and analysis methods --- including estimators that would be motivated by these assumptions ---
under outcome generating processes consistent with these theories.
In particular, we now consider outcome generating processes in which vertices respond to their own treatment and the prior behavior of their neighbors. That is, peer behavior fully mediates the effects of the assignments of an ego's peers on the ego. This is notably different from \citet{aronow_estimating_2011} and \citet{ugander_graph_2013}, where ego response is specified in terms of peer assignments without being mediated through peer behavior.\footnote{
Note that this specification in terms of ``direct'' effect could be compatible with various data-generating processes that involve ``indirect'' effects --- at least on a short time scale.
}

We consider a dynamical model with discrete time steps in which a vertex's behavior at time $t$, denoted by the vector $Y_{i,t}$, is a function $h$ of ego treatment assignment and it and its neighbors' prior behaviors $Y_{\delta{(i)},t-1}$, such that
$$
h_{i,t} (\cdot): \mathbb{Z} \times \mathbb{Y}^{k_i + 1} \times \mathbb{U}^N \rightarrow \mathbb{Y},
$$
where $k_i$ is the degree of vertex $i$ and $Y_{\cdot,0}$ is initialized by some prior process. That is, $h_{i,t}(\cdot)$ is the nonparametric structural equation (NPSE) for $Y_{i,t}$.

Together with the graph $G$, the function $h_{i,t}(\cdot)$ determine the treatment response function $f_i(\cdot)$. Thus, this outcome generating process implies some CTR assumptions. After the first time step (i.e., at time 1), the effective treatment for a vertex, the function $g_i(\cdot)$ considered earlier, maps to the space of the vertex's treatment. After the second time step, it maps to the space of the vertex's treatment and its neighbors treatment. After the third time step (i.e., at time 3), the effective treatment is no finer than the treatment of all vertices within distance 2.  At time step $t$, the effective treatment is no finer than the treatment of all vertices within distance $t - 1$. We see here that under such a dynamic outcome generating process, Manski's notion of effective treatment, conceived of to limit the scope of dependence, quickly expands to encompass the full graph.\footnote{And similarly for the assumed exposure models in \citet{aronow_estimating_2011} and \cite{ugander_graph_2013}.}

\subsubsection{Utility linear-in-means}
\label{section:utility_linear-in-means}
Many familiar models are included in the above outcome generating process. To make this more concrete, and for our subsequent simulations, we consider a model in which a vertex's behavior is a stochastic function of the mean of neighbors' prior behaviors, so that behavior at some new time step $t$ is generated as:
\begin{align}
\label{probit_model}
Y_{i, t}^* &= \alpha +  \beta Z_i+ \gamma \frac{{A}_i^{'} Y_{t - 1}}{k_i}  + U_{i,t} \\ 
\label{probit_model_link}
Y_{i, t} &= a\big( Y_{i, t}^* \big)
\end{align}
where $A_i$ is a row of the adjacency matrix and $k_i$ is the degree of vertex $i$. In the case of a binary behavior, we work with $a(x) = 1\{x > 0\}$ and $U_{i,t} \sim \mathcal{N}(0, 1)$, which is a probit model. We initialize behaviors with $Y_{i, 0} = 0$. Here $\alpha$ is the baseline, where a negative $\alpha$ determines the threshold that must be crossed for $Y_{i, t}^*$ to be positive. Setting $\beta$ determines the strength of the direct effect of the treatment, while $\gamma$ is the slope for peer behavior, and therefore determines the strength of the peer effects.  This process is then run up to a maximum time $T$.  As described above, with a small value of $T$, this implies CTR assumptions.

This can be interpreted as a \emph{noisy best response} or \emph{best reply} model \citep{blume_statistical_1995},
when vertices anticipate neighbors taking the same action in the present round as they did in the previous round.
In particular, we can interpret $Y_{i, t}^*$ as the payoff for vertex $i$ to adopt behavior 1 at time $t$.
When $\gamma > 0$, then this is a semi-anonymous graphical game with strategic complements \citep[Ch. 9]{jackson_social_2008}.

\subsection{Analysis and estimation}
\label{estimation}
We focus on the ATE (the average treatment effect; $\tau$ in Equation \ref{tau_is_ate}), which is naturally of interest when considering whether a new treatment would be beneficial if applied to all units.

There are many options available for estimating the ATE. For example, if the relevant network is completely unknown or if peer effects are not expected, then one might use estimators for experiments without interference, such as a simple difference-in-means between the outcomes of vertices assigned to treatment and control. To clarify the sources of error in estimation, we begin with the population analogs of these quantities --- i.e., the associated estimands --- and return to the estimators themselves in Section \ref{sec:estimators}.
Consider the simple difference-in-means estimand
\begin{equation}
\label{eq:tau_itr}
\tau^d_{\mathrm{ITR}}(1, 0) = \mu^d_\mathrm{ITR}(1) - \mu^d_\mathrm{ITR}(0)
\end{equation}
where the $\mu^d_\mathrm{ITR}$ are mean outcomes when a vertex is in treatment and control, i.e.,
$$
\mu^d_\mathrm{ITR}(z) = \frac{1}{N} \sum_{i=1}^N \E^d[Y_i \vert Z_i = z_i].
$$
We index these quantities by both the definition of effective treatments (ITR for ``individualistic treatment response'', as in Section \ref{sec:treatment_response_assumptions}) and the experimental design $d$, since the former determines the conditioning involved and the latter determines the distribution of $Z$ over which we take expectations.

When a vertex's outcome depends on the treatment assignments of others, these quantities need not equal the quantities of interest. That is, they can suffer from some estimand bias, such that 
$
\tau^d_{\mathrm{ITR}}(1, 0) - \tau(1, 0)
$
is non-zero. Each vertex assigned to treatment contributes to this bias through the difference between its expected outcome when assigned to treatment (given the experimental design) and what would be observed under global treatment.
More generally, for some global treatment vector $z$, vertex $i$ contributes to the bias of $\mu^d_\mathrm{ITR}(z)$ through
$
\E^d[Y_i - Y_i(Z = z) \vert Z_i = z_i].
$
If the treatment assignment of other vertices do not affect vertex $i$'s behavior much, 
then this contribution might be quite small. Or this contribution could be more substantial.

\subsubsection{Bias reduction through design}
We are now equipped to elaborate on the intuition that graph cluster randomization puts vertices in conditions ``closer'' to 
the global treatments of interest and thereby reduces bias in estimates of average treatment effects, 
even if a vertex's outcome depends on the global treatment vector. The result below uses a linear outcome model that has as a special case the linear-in-means model, as made clear at the end of this subsection. 

\begin{theorem}
\label{prop:linear_bias_reduction_design}
Assume we have a linear outcome model for all vertices $i \in V$ such that
\begin{eqnarray}
\E_U[Y_i(z, U)] = a_i + \sum_{j \in V} B_{ij} z_j
\end{eqnarray}
and further assume that $Y_i(z, u)$ is monotonically increasing in $z$
for every $u \in \mathbb{U}^N$ and 
vertex $i$
such that $B_{ij} \geq 0$.

Then for some mapping of vertices to clusters,
the absolute bias of $\tau^{d}_\mathrm{ITR}(1,0)$ when the design $d$ is graph cluster randomization is less than or equal to 
the absolute bias when $d$ is independent assignment, with a fixed treatment probability $p$.
\end{theorem}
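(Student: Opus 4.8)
The plan is to write the estimand bias of $\tau^{d}_\mathrm{ITR}(1,0)$ as an explicit double sum, evaluate that sum separately for the two designs, and then invoke the sign constraint $B_{ij}\ge 0$ to compare the two absolute biases term by term. First I would compute the target under the linear model: since $\E_U[Y_i(1,U)] = a_i + \sum_{j} B_{ij}$ and $\E_U[Y_i(0,U)] = a_i$, we have $\tau(1,0) = \tfrac1N\sum_i\sum_{j\in V} B_{ij}$. Substituting the same linear model into the ITR estimand gives $\E^d[Y_i \vert Z_i = z_i] = a_i + \sum_{j} B_{ij}\,\E^d[Z_j \vert Z_i = z_i]$, so the intercepts cancel in $\mu^d_\mathrm{ITR}(1)-\mu^d_\mathrm{ITR}(0)$ and the bias takes the clean form
\begin{equation}
\tau^{d}_\mathrm{ITR}(1,0) - \tau(1,0) = \frac1N\sum_i\sum_{j\in V} B_{ij}\big(\E^d[Z_j \vert Z_i = 1] - \E^d[Z_j \vert Z_i = 0] - 1\big).
\end{equation}

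Next I would evaluate the conditional-expectation gap $\Delta^d_{ij} := \E^d[Z_j \vert Z_i = 1] - \E^d[Z_j \vert Z_i = 0]$ design by design. The diagonal always satisfies $\Delta^d_{ii} = 1$, so those terms contribute $B_{ii}(1-1)=0$ and drop out. Under \emph{independent} assignment with marginal probability $p$, the off-diagonal $Z_j$ is independent of $Z_i$, so $\Delta^d_{ij} = p - p = 0$ and the bias collapses to $-\tfrac1N\sum_i\sum_{j\ne i}B_{ij}$. Under graph cluster randomization, $Z_j = W_{C(j)}$; if $C(j)=C(i)$ then $Z_j=Z_i$ and $\Delta^d_{ij}=1$, whereas if $C(j)\ne C(i)$ the two cluster assignments are independent and again $\Delta^d_{ij}=p-p=0$. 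Hence only cross-cluster pairs survive and the cluster bias equals $-\tfrac1N\sum_i\sum_{j\ne i}B_{ij}\,\mathbf{1}\{C(j)\ne C(i)\}$.

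Finally I would compare magnitudes. Monotonicity forces $B_{ij}\ge 0$, so both biases are sums of non-positive terms and are therefore themselves non-positive; taking absolute values merely flips the sign. The cluster-randomization bias is obtained from the independent-assignment bias by deleting exactly the within-cluster terms, each of which is non-negative, so
$$
\big|\tau^{\mathrm{cl}}_\mathrm{ITR}(1,0)-\tau(1,0)\big| = \frac1N\sum_i\sum_{j\ne i}B_{ij}\,\mathbf{1}\{C(j)\ne C(i)\} \le \frac1N\sum_i\sum_{j\ne i}B_{ij} = \big|\tau^{\mathrm{ind}}_\mathrm{ITR}(1,0)-\tau(1,0)\big|.
$$
This inequality in fact holds for \emph{any} clustering (with equality for the singleton clustering that reproduces independent assignment), which more than establishes the ``for some mapping'' claim.

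The main obstacle is conceptual rather than computational: the whole argument hinges on the sign hypothesis $B_{ij}\ge 0$ supplied by monotonicity. It is precisely what lets me pass from a term-by-term comparison of the \emph{signed} bias contributions to a comparison of absolute biases. Without it, deleting the within-cluster terms could push the bias across zero and \emph{increase} its magnitude, so the monotone comparison would fail. A secondary point to verify is the role of the fixed treatment probability $p$: matching the marginals of the two designs is what makes both the independent and the cross-cluster off-diagonal gaps vanish as $p-p$, and it equalizes the diagonal ($\Delta^d_{ii}=1$) across designs; if the marginals differed, these cancellations --- and hence the clean double-sum form above --- would break.
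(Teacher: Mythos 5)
Your proof is correct and takes essentially the same route as the paper's: your $\Delta^d_{ij}$ bookkeeping reproduces exactly the paper's expressions $\tau^{\mathrm{gcr}}_{\mathrm{ITR}}(1,0) = \tfrac{1}{N}\sum_{ij}B_{ij}\mathbf{1}[C(i)=C(j)]$ and $\tau^{\mathrm{ind}}_{\mathrm{ITR}}(1,0) = \tfrac{1}{N}\sum_{i}B_{ii}$, after which the term-by-term comparison of the two non-positive biases via $B_{ij}\ge 0$ is the paper's argument verbatim. Your closing remark that the inequality holds for \emph{any} clustering also matches the paper's own footnote to that effect.
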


\begin{proof}
Using the linear model for $Y_i$ and the definition of $\tau$, we have that the true ATE $\tau$ is given by
\begin{eqnarray}
\label{eq:linear_outcome_model}
\tau(1,0) = \mu(1) - \mu(0) = \frac{1}{N} \sum_{ij} B_{ij}
\end{eqnarray}
for this outcome model.  Under graph cluster randomization,
\begin{eqnarray}
\tau^\mathrm{gcr}_\mathrm{ITR}(1,0) = \frac{1}{N} \sum_{ij} B_{ij} \mathbf{1}[C(i) = C(j)].
\end{eqnarray}
Then under independent assignment,
\begin{eqnarray}
\tau^\mathrm{ind}_\mathrm{ITR}(1,0) = \frac{1}{N} \sum_{i} B_{ii}.
\end{eqnarray}
Because $B_{ij} \geq 0$, together this implies that
$\tau(1,0) - \tau^\mathrm{gcr}_\mathrm{ITR}(1,0)  \leq 
\tau(1,0)-\tau^\mathrm{ind}_\mathrm{ITR}(1,0)$, where monotonicity dictates that each side of this inequality is positive. 
\end{proof}

This comparison allows seeing how, at least in this linear model, the magnitude of bias reduction from graph cluster randomization depends on the ``strength'' of the interactions within clusters. That is, this clarifies the intuition that using clusters formed from more distant vertices will not generally reduce bias as much as clusters formed from closer vertices, as is the aim of using graph partitioning methods such as $\epsilon$-net partitioning or community detection methods.\footnote{Note that in the above treatment, the mapping of vertices to clusters is not random, so any mapping is bias reducing. 
}
It also highlights that when there are mainly non-zero $B_{ij}$'s, \emph{ceteris paribus} large clusters result in more bias reduction; of course, there are corresponding costs to precision.

To clarify this further, let's consider the relative bias defined by
\begin{eqnarray}
\tau^\mathrm{gcr}_\mathrm{ITR}(1,0)/\tau(1,0) - 1 = \frac{\sum_{ij} B_{ij} \mathbf{1}[C(i) = C(j)]}{\sum_{ij} B_{ij}} - 1.
\end{eqnarray}
Assume that there are $\Ord(N)$ clusters of size $\Ord(1)$ used for the graph cluster randomization.\footnote{As shown by \citet{ugander_graph_2013}, assuming NTR and that the graph satisfies a restricted growth condition, this implies that 
an experimental design with $\Ord(N)$ clusters of size $\Ord(1)$ will produce NTR-based estimators with bounded variance.}
Under this condition, the numerator has $\Ord(N)$ terms and the denominator has $\Ord(N^2)$ terms.  So unless there is a judicious choice of clustering, the numerator will be overwhelmed by the denominator and the estimator $\tau^\mathrm{gcr}_\mathrm{ITR}(1,0)$ will be a dramatic underestimate of the true average treatment effect, and it's clear that $\tau^\mathrm{ind}_\mathrm{ITR}(1,0)$ would be even worse.  In order for meaningful relative bias reduction to occur, the clustering must capture the structure of the dependence between units specified by the matrix of coefficients $\bm$.

In Appendix \ref{appendix:balanced_linear_case}, we derive similar intuitions from an alternative graph cluster randomization that preserves balance between the sizes of the treatment and control group.  There graph cluster randomization no longer always achieves bias reduction for every clustering over independent assignment, but meaningful bias reduction is again possible and depends on how the clustering captures $\bm$ in an identical way.

This linear outcome model has as special cases some other models of interest. In particular, it has as a special case the linear-in-means model,
which is widely studied and used in econometrics \citep[e.g.,][]{manski_identification_1993_reflection,lee_identification_2007,bramoulle_identification_2009,goldsmith_social_2013}.
Consider $a(x) = x$ in Eq.~\ref{probit_model}.
Then for $t \geq 1$ the quantity $E^U[Y_{i,t}(z)]$ is
\begin{eqnarray}
\E^U[Y_{i,t}(z)] = \alpha +  \beta z_i+ \gamma \frac{{A}_i^{'} \E^U[Y_{t - 1}(z)]}{k_i}.
\end{eqnarray}
The closed form solution for $E^U[Y_{t}(z)]$ for any $t \ge 0$ is then given by
\begin{eqnarray}
\E^U[Y_{t}(z)] = (\gamma \dm^{-1} \am)^t \E^U[Y_0] + \sum_{q=0}^{t-1} (\gamma  \dm^{-1} \am)^q (\alpha + \beta z)
\end{eqnarray}
where $\dm^{-1}$ is the diagonal matrix of inverse degrees, $\am$ is the adjacency matrix, and $Y_0$ is the
vector of initial states.
This is a linear outcome model with 
$a_i = \alpha (1 - \gamma^t) / (1 - \gamma) + ((\gamma \dm^{-1} \am)^t \E^U[Y_0])_i$
and 
$B_{ij} = \beta \sum_{q = 0}^{t-1} (\gamma \dm^{-1} \am)^q_{ij}$.

\subsubsection{Bias reduction through analysis}
Definitions of effective treatments other than ITR correspond to different estimands.
In particular, we can incorporate assumptions about effective treatments into Equation \ref{tau_is_ate}.   Let
\begin{align}
\label{mu_g}
\mu^d_g(z) = \frac{1}{N} \sum_i \E^d[Y_i \vert g_i(Z) = g_i(z)]
\end{align}
be the mean outcome for the global treatment $z$ when $g$ specifies the effective treatments and $d$ is the experimental design.
Then we have
\begin{align}
\tau^d_g(z_1, z_0) &=  \mu^d_g(z_1) - \mu^d_g(z_0)
\label{effective_treatment_tau}
\end{align}
as our revised estimand for the ATE.\footnote{It is precisely the effective treatment assumption that allows generalization from a single sampled $z$ to the behavior at $z_1$ and $z_0$.}

If the effective treatment assumption corresponding to this estimator is satisfied, then it is unbiased.
As with the ITR assumption, we can again describe the bias that occurs when effective treatments are incorrectly specified. 
For some global treatment vector $z$, vertex $i$ contributes to the bias of $\mu^d_{g}(z)$ through
\begin{align}
\label{eq:bias_in_effective_contribution}
\E^d[Y_i - Y_i(Z = z) \vert g_i(Z) = g_i(z)],
\end{align}
where $g_i(\cdot)$ is the potentially incorrect (i.e., too coarse) specification of effective treatments for vertex $i$.

Considering two or more specifications of effective treatments can allow us to elaborate on the intuition that
using a finer specification of effective treatments will reduce bias by comparing only vertices that are in conditions ``closer'' to the global treatments of interest. For example, the NTR assumption corresponds to finer effective treatments than the ITR assumption.
We also relax the NTR assumption to a fractional $\lambda$-neighborhood treatment in which a vertex is considered effectively in global treated if a fraction $\lambda$ of its neighbors are treated (and the same for control) \citep{ugander_graph_2013}.

Here we analyze functions $g_i(\cdot)$ such that $g_i(Z) = g_i(z)$ just implies that 
for some subset of vertices $J_i$ we have that 
$\sum_{j \in J_i} 1\{Z_j = z_j\} \geq l_i$ and that $Z_i = z_i$. 
These are conditions such that some subset of size $l_i$ of a set of vertices $J_i$
has treatment assignment matching that in $z$, the global treatment vector of interest.
The fractional neighborhood treatment response (FNTR) assumption corresponds to such a function
with $J_i = \delta(i)$ and $l_i = \lceil \lambda k_i \rceil$, where $k_i$ is vertex $i$'s degree.
This has both ITR and NTR as special cases with $\lambda = 0$ and $\lambda = 1$ respectively.\footnote{Of course, ITR can also be analyzed with any choice of $J_i$, including the empty set.}

If we have two such functions $g_i^A(\cdot)$ and $g_i^B(\cdot)$ with the same $J_i$,
and $g_i^A(z) = g_i^A(z')$ implies $g_i^B(z) = g_i^B(z')$,
then we say that $g_i^A(\cdot)$ is \emph{more restrictive} than $g_i^B(\cdot)$.
  
\begin{theorem}
\label{prop:bias_reduction_analysis}
Let $g^A(\cdot)$ and $g^B(\cdot)$ be vectors of such functions where $g_i^A(\cdot)$ is more restrictive than $g_i^B(\cdot)$ for every vertex $i$, and let independent random assignment be the experimental design.  A sufficient condition for estimand $\tau^\mathrm{ind}_{g^A}(1, 0)$ to have less than or equal absolute bias than $\tau^\mathrm{ind}_{g^B}(1, 0)$,
where these estimands are defined by Equation \ref{effective_treatment_tau}, is that we have monotonically increasing responses or 
monotonically decreasing responses for every vertex with respect to $z$.
\end{theorem}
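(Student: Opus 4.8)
The plan is to decompose the bias of the ATE estimand into per-vertex, per-endpoint contributions and then show that making the effective-treatment specification more restrictive shrinks each contribution toward zero without changing its sign. Writing the bias of $\tau^{\mathrm{ind}}_{g}(1,0)$ as $[\mu^{\mathrm{ind}}_{g}(1) - \mu(1)] - [\mu^{\mathrm{ind}}_{g}(0) - \mu(0)]$, where $\mu(z) = \frac{1}{N}\sum_i \E[Y_i(Z=z)]$, reduces the problem to controlling the two sums of contributions of the form in Equation \ref{eq:bias_in_effective_contribution}, one at the global-treatment endpoint $z=1$ and one at the global-control endpoint $z=0$.

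First I would pin down the sign of each contribution under the monotone-increasing hypothesis. Conditioning on $g_i(Z)=g_i(1)$ restricts to assignments $Z$ with $Z \le 1$ coordinatewise, so pointwise monotonicity of $Y_i(\cdot,u)$ gives $Y_i(Z,u) \le Y_i(1,u)$ and hence $\E^{\mathrm{ind}}[Y_i - Y_i(Z=1) \mid g_i(Z)=g_i(1)] \le 0$; symmetrically the $z=0$ contribution is $\ge 0$. Thus the $z=1$ part of the bias is nonpositive and the $z=0$ part is nonnegative, for both $g^A$ and $g^B$.

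The heart of the argument is a monotonicity-in-threshold lemma. Since $g^A_i$ is more restrictive than $g^B_i$ and shares the same $J_i$, the $g^A$-equivalence class of the target is contained in its $g^B$-class, so the conditioning event $\{Z_i = z_i,\ \sum_{j \in J_i} \mathbf{1}\{Z_j = z_j\} \ge l_i\}$ carries a larger threshold $l_i^A \ge l_i^B$. I would then show that $\E^{\mathrm{ind}}[Y_i \mid Z_i = 1,\ \sum_{j \in J_i} Z_j \ge l]$ is nondecreasing in $l$, i.e. moves toward $\E[Y_i(Z=1)]$. Integrating out $U$ and the coordinates outside $J_i \cup \{i\}$ (which are independent of the conditioning event) leaves a function $\phi$ of $Z_{J_i}$ that is monotone nondecreasing in each coordinate, so it suffices to prove $\E[\phi \mid S \ge l]$ is nondecreasing in $l$ for $S = \sum_{j \in J_i} Z_j$ with i.i.d. Bernoulli$(q)$ coordinates. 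Writing $\E[\phi \mid S \ge l]$ as a convex combination of $\E[\phi \mid S = l]$ and $\E[\phi \mid S \ge l+1]$, it is enough that $\E[\phi \mid S = m]$ be nondecreasing in $m$; this follows from the standard coupling in which a uniformly random $m$-subset is nested inside a uniformly random $(m+1)$-subset (take the first $m$ versus the first $m+1$ elements of a common uniform ordering), under which the two indicator vectors are ordered coordinatewise and $\phi$ respects that order.

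Combining the pieces finishes the increasing case: since $l_i^A \ge l_i^B$, each vertex's nonpositive $z=1$ contribution is closer to zero under $g^A$ than under $g^B$, and each nonnegative $z=0$ contribution is likewise smaller under $g^A$. Hence $0 \ge \mathrm{bias}_{A}(1) \ge \mathrm{bias}_{B}(1)$ and $0 \le \mathrm{bias}_{A}(0) \le \mathrm{bias}_{B}(0)$, and subtracting gives $0 \ge \mathrm{Bias}(\tau^{\mathrm{ind}}_{g^A}) \ge \mathrm{Bias}(\tau^{\mathrm{ind}}_{g^B})$, i.e. $|\mathrm{Bias}(\tau^{\mathrm{ind}}_{g^A})| \le |\mathrm{Bias}(\tau^{\mathrm{ind}}_{g^B})|$. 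The monotone-decreasing case is identical after flipping every inequality, which leaves the absolute-value conclusion unchanged. I expect the coupling/convex-combination lemma to be the main obstacle, since the sign and combination steps are bookkeeping; the one point needing care is confirming that conditioning on $Z_i = 1$ and on the off-neighborhood coordinates does not disturb the coordinatewise monotonicity of $\phi$ that the coupling relies on.
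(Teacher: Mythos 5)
Your proof is correct and takes essentially the same route as the paper's: your $\phi$ is the paper's $\tilde{Y}_i$, your nested-subset coupling is the paper's flip-a-random-zero coupling, and your convex-combination aggregation over thresholds $l$ is equivalent to the paper's split of the event $B_i$ into $A_i$ and $\neg A_i \land B_i$. The only real difference is that you make explicit the sign of each endpoint's bias contribution (nonpositive at $z=1$, nonnegative at $z=0$), a step the paper leaves implicit in its ``immediately conclude'' sentence.
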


\begin{proof} Given in Appendix \ref{appendix:theory_analysis}.
\end{proof}
Note that the utility linear-in-means model in Equation \ref{probit_model} satisfies this monotonicity condition if the direct effect $\beta$ and peer effect $\gamma$ are both non-negative.

What about the combination of graph cluster randomization with these neighborhood-based estimands?
As we show in Appendix \ref{appendix:theory_analysis}, 
similar arguments apply if we count up matching \emph{clusters} instead of vertices,
but use of the FTNR estimand with graph cluster randomization is not necessarily bias reducing
under monotonic responses without this modification.

\subsubsection{Estimators}
\label{sec:estimators}
We now briefly discuss estimators for the estimands considered above.
First, we can estimate $\tau^d_\mathrm{ITR}(1, 0)$ with the difference in sample means
$
\hat{\tau}_{\text{I,S}}(1, 0) = \hat{\mu}_\text{I,S}(1) - \hat{\mu}_\text{I,S}(0)
$
where the $\hat{\mu}_\text{I,S}$ are simple sample means, i.e.,
$$
\hat{\mu}_\text{ITR,S}(z) = \frac{1}{\sum_{i=1}^N \mathbf{1}[Z_i = z_i]} \sum_{i=1}^N Y_i \mathbf{1}[Z_i = z_i].
$$
Note that these estimators are again indexed by the effective treatment $I$ used (i.e., ITR), but, unlike the estimands, they are not indexed by the design, though the design determines their distribution. We additionally distinguish these estimators by the weighting used (discussed below), identifying the simple (i.e., unweighted) means with~$\mathrm{S}$. If a vertex's own treatment is ignorable (as it is under random assignment, independent or graph clustered), then this estimator will be unbiased for $\tau^d_\mathrm{ITR}(1, 0)$.

More generally, there is a natural correspondence between the conditioning on $g_i(Z) = g_i(z)$ in the estimands and the vertices whose outcomes are used in an estimator. Given some specification of effective treatments $g$, one could construct an estimator of the ATE as a simple difference in the sample means for vertices in effective treatment and in effective control
$$
\hat{\tau}_{g,\mathrm{S}}(1, 0) = \hat{\mu}_{g,\mathrm{S}}(1) - \hat{\mu}_{g,\mathrm{S}}(0)
$$
where we have
$$
\hat{\mu}_{g,\mathrm{S}}(z) = \frac{
\sum_{i=1}^N Y_i \mathbf{1}[g_i(Z) = g_i(z)]
}{
{\sum_{i=1}^N \mathbf{1}[g_i(Z) = g_i(z)]} 
}.
$$
This estimator will only be unbiased for the corresponding estimand $\mu^d_{g}(z)$ under certain conditions.
To have an unbiased estimate of $\mu^d_g(z)$ using the sample mean requires that
$\E^d[Y_i \vert g_i(Z)=g_i(z)]$
be independent of
$\Pr^d[g_i(Z) = g_i(z)]$, the probability vertex $i$ is assigned to that effective treatment.
That is, the effective treatments must be ignorable.
One way for the effective treatments to be ignorable is if either of these quantities is the same for all vertices.
Usually we would not want to assume that $\E^d[Y_i \vert g_i(Z)=g_i(z)]$ is homogeneous, and 
$\Pr[g_i(Z) = g_i(z)]$ will not be homogeneous under many relevant effective treatments, such as neighborhood treatment response (NTR), since the distribution of effective treatments for a vertex depends on network structure.
As \citet{ugander_graph_2013} observe, high degree vertices will generally have low probability of being assigned to some kinds of ``extreme'' effective treatments, such as having all neighbors treated, while low degree vertices have a much higher probability of being in such an effective treatment.

Observed effective treatments can be made ignorable by conditioning on the design \citep{aronow_estimating_2011} or sufficient information about the vertices. The experimental design determines the probability of assignment to an effective treatment $\pi_i(z) = \Pr(g_i(Z) = g_i(z))$. In the case of graph cluster randomization and effective treatments determined by thresholds, these probabilities can be computed exactly using a dynamic program \citep{ugander_graph_2013}. These are generalized propensity scores that can then be used in Horvitz--Thompson estimators or other inverse-probability weighted estimators, such as the Hajek estimator \citep{aronow_estimating_2011} of the ATE. The Horvitz--Thompson estimator will often suffer from excessive variance, so we focus on the Hajek estimator:
\begin{align}
\hat{\tau}_{g,H}(z_1, z_0) = \; \; 
&\left(\sum_{i=1}^N \frac{\mathbf{1}[g_i(Z) = g_i(z_1)]}{\pi_i(z_1)}\right)^{-1} 
\sum_{i=1}^N \frac{Y_i \mathbf{1}[g_i(Z) = g_i(z_1)]}{\pi_i(z_1)} \; \; 
-  \nonumber\\ 
&\left(\sum_{i=1}^N \frac{\mathbf{1}[g_i(Z) = g_i(z_0)]}{\pi_i(z_0)}\right)^{-1} 
\sum_{i=1}^N \frac{Y_i \mathbf{1}[g_i(Z) = g_i(z_0)]}{\pi_i(z_0)} \; \; 
\label{hajek_estimate}
\end{align}
This estimator provides a nearly unbiased estimate of Equation \ref{effective_treatment_tau}.\footnote{The bias of the Hajek estimator is not zero, but it is typically small and worth the variance reduction. See \citet{aronow_estimating_2011}.}

Beyond bias, we also care about the variance of the estimator as well.
Estimators making use only of vertices with all neighbors in the same condition
will suffer from substantially increased variance, both because few vertices will be assigned to this effective treatment and because the weights in the Hajek estimator will be highly imbalanced.
This could motivate borrowing information from other vertices, such as by using additional modeling
or, more simply, through relaxing the definition of effective treatment, such as by using the fractional relaxation of the NTR assumption (FNTR).

The most appropriate effective treatment assumption to use for the analysis of a given experiment is not clear \emph{a priori}.
We will consider estimators motivated by two different effective treatments in our simulations.

\section{Simulations}
\label{sec:simulations}

In order to evaluate both design and analysis choices, we conduct simulations that instantiate the model of network experiments presented above. First, graph cluster randomization puts more vertices into positions where their neighbors (and neighbors' neighbors) have the same treatment; this is expected to produce observed outcomes ``closer'' to those that would be observed under global treatment. Second, estimators using fractional neighborhood treatment restrict attention to vertices that are ``closer'' to being in a situation of global treatment. Third, weighting using design-based propensity scores adjusts for bias resulting from associations between propensity of being in an effective treatment of interest and potential outcomes. Each of these three changes to design and analysis is expected to reduce bias, potentially at a cost to precision.
Under some conditions, we have shown above that these design and analysis methods reduce (or at least do not increase) bias for the ATE.
The goal of these simulations then is to characterize the magnitude of this bias reduction, weigh it against increases in variance,
and do so specifically under circumstances that do not meet the given sufficient conditions.

For each run of the simulation, we do the following. First, we construct a small world network with $N = \,$ 1,000 vertices and initial degree parameter $k=10$. We vary the rewiring probability $p_\mathrm{rw} \in \{0.00, 0.01, 0.10, 0.50, 1.00\}$, thereby producing both regular powers of the cycle ($p_\mathrm{rw} = 0$), graphs with ``small world'' characteristics ($p_\mathrm{rw} \in \{0.01, 0.10\}$), graphs with many random edges and less clustering ($p_\mathrm{rw} = 0.50$), and graphs with all random edges ($p_\mathrm{rw} = 1.00$).  The small world model of networks \citep{watts_collective_1998} is notable for being able to succinctly introduce clustering into an otherwise complex distribution over random graphs, all featuring a small diameter. The clustering of the graph, typically measured by the clustering coefficient, is a measure of the extent to which adjacent vertices share many common neighbors in the graph, and many social networks, including online social networks \citep[e.g.,][]{ugander_anatomy_2011}, have been found to exhibit a high degree of clustering as well as a small diameter.

For graph cluster randomization, we use a 3-net clustering and randomly assign each cluster in its entirety to treatment or control with equal probability.\footnote{Simulations for the Louvain method \citep{blondel_fast_2008} for community detection, not reported here, are qualitatively similar to those for $\epsilon$-net clustering, but generally resulted in more bias reduction but also larger variance increases, as expected by this method's resolution limit.}
We compare clustered assignment to independent random assignment.

We generate the observed outcomes using the probit model in Equations \ref{probit_model} and \ref{probit_model_link}, and set the baseline as $\alpha = -1.5$, making the behavior somewhat rare:
\begin{align}
Y_{i, t}^* = -1.5 +  \beta Z_i+ \gamma \frac{{A}_i^{'} Y_{i, t - 1}}{k_i}  + U_{i,t}, 
\hspace{0.5in}
Y_{i, t} = 1\{Y_{i, t}^* > 0\}.
\end{align}
We initialize $Y_{i,0} = 0$ for all vertices, and then run the process  for all combinations of $\beta \in \{0.0, 0.25, 0.5, 0.75, 1.0\}$ and $\gamma \in \{0.0, 0.25, 0.5, 0.75, 1.0\}$, up to a maximum time $T = 3$.\footnote{We also repeated these simulations with the small-world networks for $T = 10$. The results were qualitatively similar.}
Note that this data generating process does not
satisfy the conditions for graph cluster randomization to be bias reducing
given by Theorem \ref{prop:linear_bias_reduction_design}, since the outcome model is not linear.

Finally, for each simulation, we compute three estimates of the ATE.
The individual unweighted estimator (or difference-in-means estimator) 
$\hat{\tau}_{\mathrm{ITR,S}}$
makes no use of neighborhood information.
This is the baseline to which we compare the neighborhood unweighted estimator 
$\hat{\tau}_{\text{FNTR,S}}$
and the neighborhood Hajek estimator 
$\hat{\tau}_{\text{FNTR,H}}$, 
both using a fractional neighborhood treatment response (FTNR)
specification of effective treatments with $\lambda = 0.75$.
That is, these estimators count a vertex as being in effective treatment or effective control 
if at least three-fourths of its neighbors have the same assignment.
With independent assignment, the conditions for bias reduction given in Theorem \ref{prop:bias_reduction_analysis}
from using this estimator are satisfied.
With graph cluster randomization, it is not immediately obvious whether these conditions are satisfied (it may depend on details of the network).

\begin{figure}[tb]
\begin{center}
\includegraphics[width=.95\textwidth]{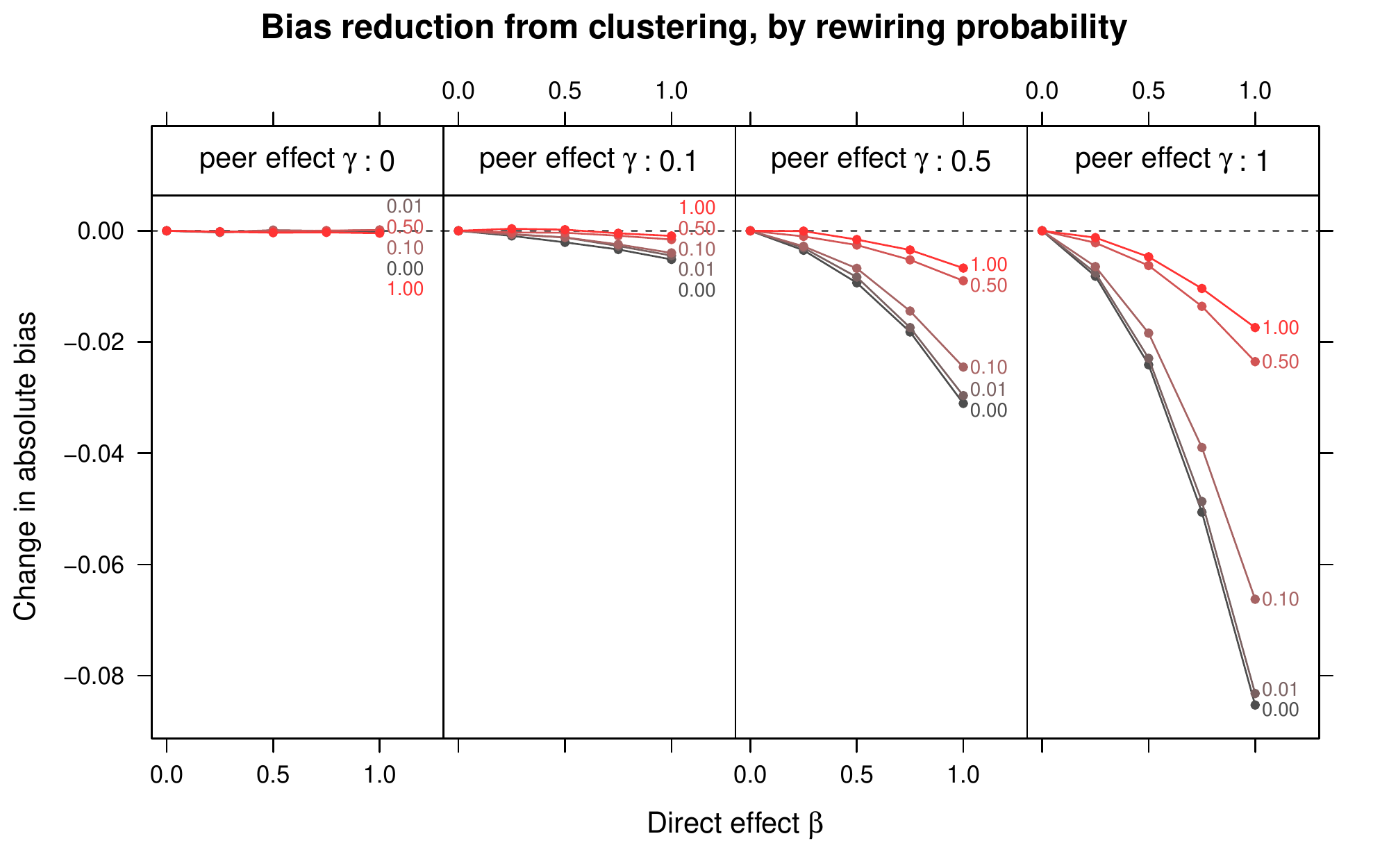}
\caption{Change in bias due to clustered random assignment as a function of 
the direct effect of the treatment $\beta$,
the rewiring probability $p_\mathrm{rw}$ (different colors),
and the strength of the peer effect $\gamma$ (different panels).
Random assignment clustered in the network reduces bias, especially when peer effects are large relative to the baseline ($\alpha=-1.5$) and when the network is more clustered.}
\label{clustering_bias}
\end{center}
\end{figure}

\begin{figure}[tb]
\begin{center}
\includegraphics[width=\textwidth]{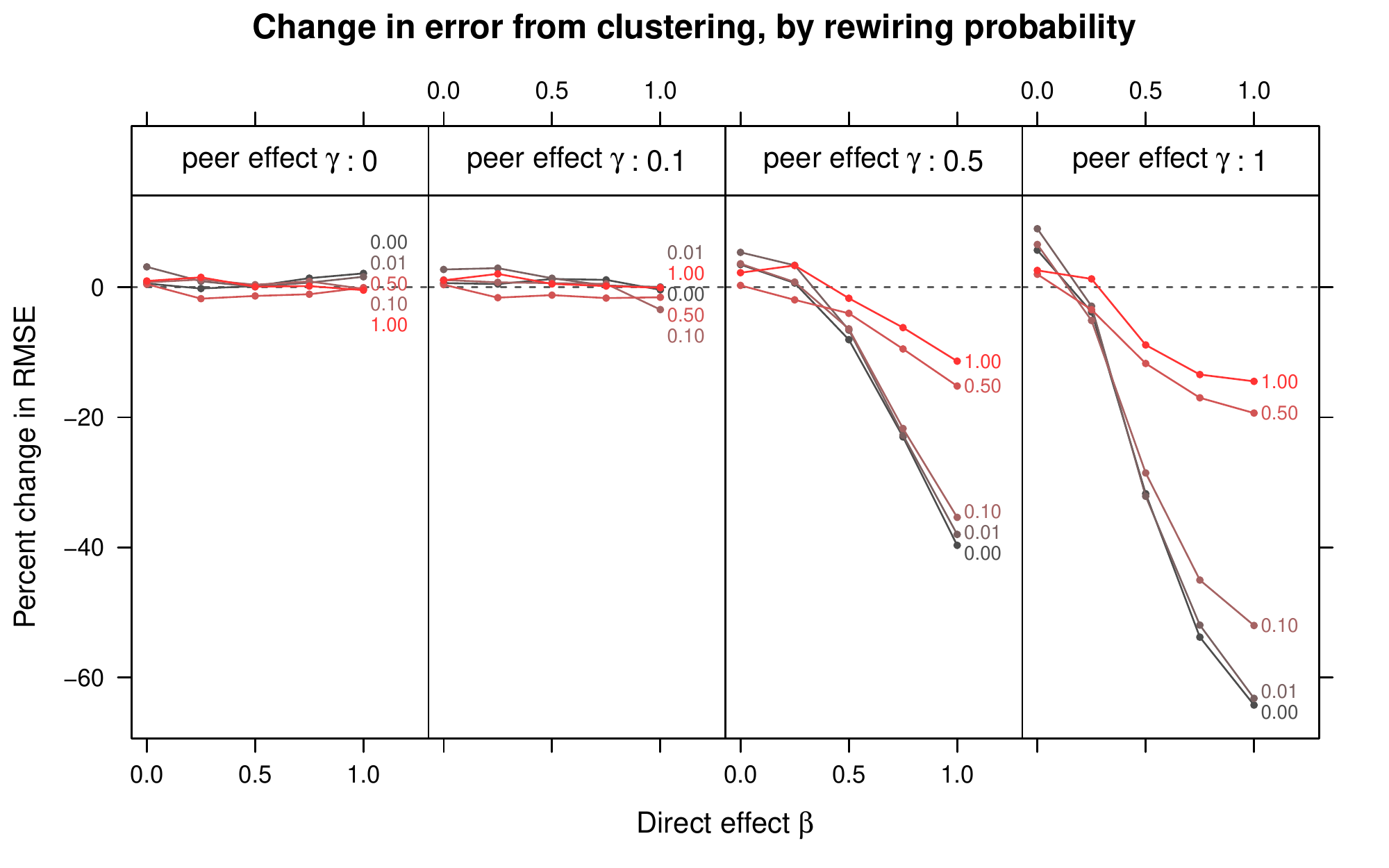}
\caption{Percent change in root-mean-squared-error (RMSE) from clustered assignment for small world networks. While in some cases graph cluster randomization increases RMSE, in other cases (when bias reduction is large), it quite substantially reduces RMSE.}
\label{clustering_rmse}
\end{center}
\end{figure}

We run each of these configurations 5,000 times.
We estimate the true ATE with simulations in which all vertices are put in treatment or control. Each configuration is run 5,000 times for the global treatment case and 5,000 times for the global control case.\footnote{As a variance-reduction strategy for comparisons between designs and true ATE, we use common random numbers throughout the simulations where possible. In particular, for generating observed outcomes, the first instance of each configuration uses the same seed $s_1$, the second instance of each configuration uses the same seed $s_2$, and so on.}

We will now present the results of our simulations of the full process 
of network experimentation. We describe our observations in order to provide insight
into how the different parts of the network experimentation process
interact and contribute to the bias and precision of our experimental estimates.
Our evaluation metrics are bias and root mean squared error (RMSE) 
of the estimated ATE. 

\subsection{Design}
First we examine the bias and mean squared error of the estimated ATE 
for designs using graph cluster randomization compared with 
independent randomization.
In both these cases we use the difference-in-means estimator $\hat{\tau}_{\text{ITR,S}}$.
As expected, using graph cluster randomization reduces bias 
(Figure \ref{clustering_bias}), especially when the peer effects 
and direct effects are large relative to the baseline ($\alpha=-1.5$), and when the network exhibits 
substantial clustering (i.e., the rewiring probability $p_\mathrm{rw}$ is small).

Reduction in bias can come with increases in variance, so it is worth 
evaluating methods that reduce bias also by the effect they 
have on the error of the estimates. 
We compare RMSE, which is increased by both bias and variance, 
between graph cluster randomization and independent 
assignment in Figure \ref{clustering_rmse}. 
In some cases, the reduction in bias comes with a significant 
increase in variance, leading to an RMSE that is either left unchanged 
or even increased. However, in cases where the bias reduction is 
large, this overwhelms the increase in variance, such that graph 
cluster randomization reduces not only bias but also RMSE substantially. 
For example, with substantial clustering ($p_\mathrm{rw} = 0.01$) 
and peer effects ($\gamma = 0.5$), we observe approximately 
40\% RMSE reduction from graph cluster randomization. 
While the RMSE reduction is strongest under substantial clustering,
if both the direct effect strength and peer effect strength are strong,
we observe significant universal reductions in RMSE from 
clustered randomization (though to varied extents), 
regardless of the clustering structure given by $p_\mathrm{rw}$.
It is notable that even with small networks (recall that $N = 1000$),
the bias reduction from graph cluster randomization is large enough
to reduce RMSE.\footnote{Experimenters will generally want to conduct statistical inference, such as through producing standard errors for estimated ATEs, which would need to account
for the increase in variance from graph cluster randomization. We do not treat such methods here.
}

\begin{figure*}[tb]
\begin{center}
\includegraphics[width=\textwidth]{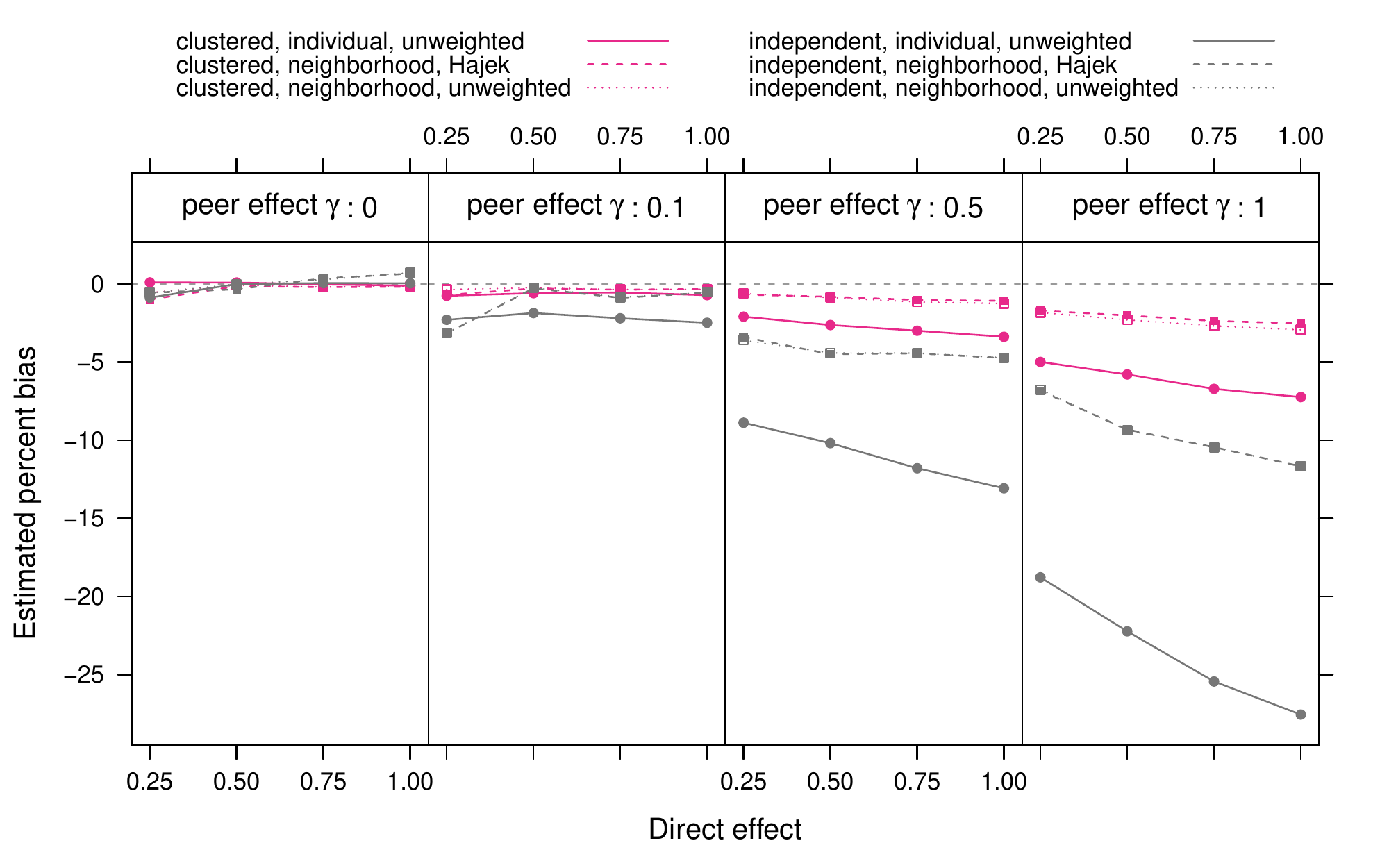}
\caption{Relative bias in ATE estimates for different assignment procedures, exposure models, and estimation methods. The most striking differences are between the assignment procedures, though the neighborhood exposure model also reduces bias (at the cost of increased variance --- see Figure \ref{exposure_model_rmse}). Relative bias is not defined when the true value is zero, so we exclude simulations with the direct effect $\beta = 0$. For all networks, the rewiring probability was $p_\mathrm{rw} = 0.01$.}
\label{exposure_model_bias}
\end{center}
\end{figure*}

\begin{figure*}[tb]
\begin{center}
\includegraphics[width=\textwidth]{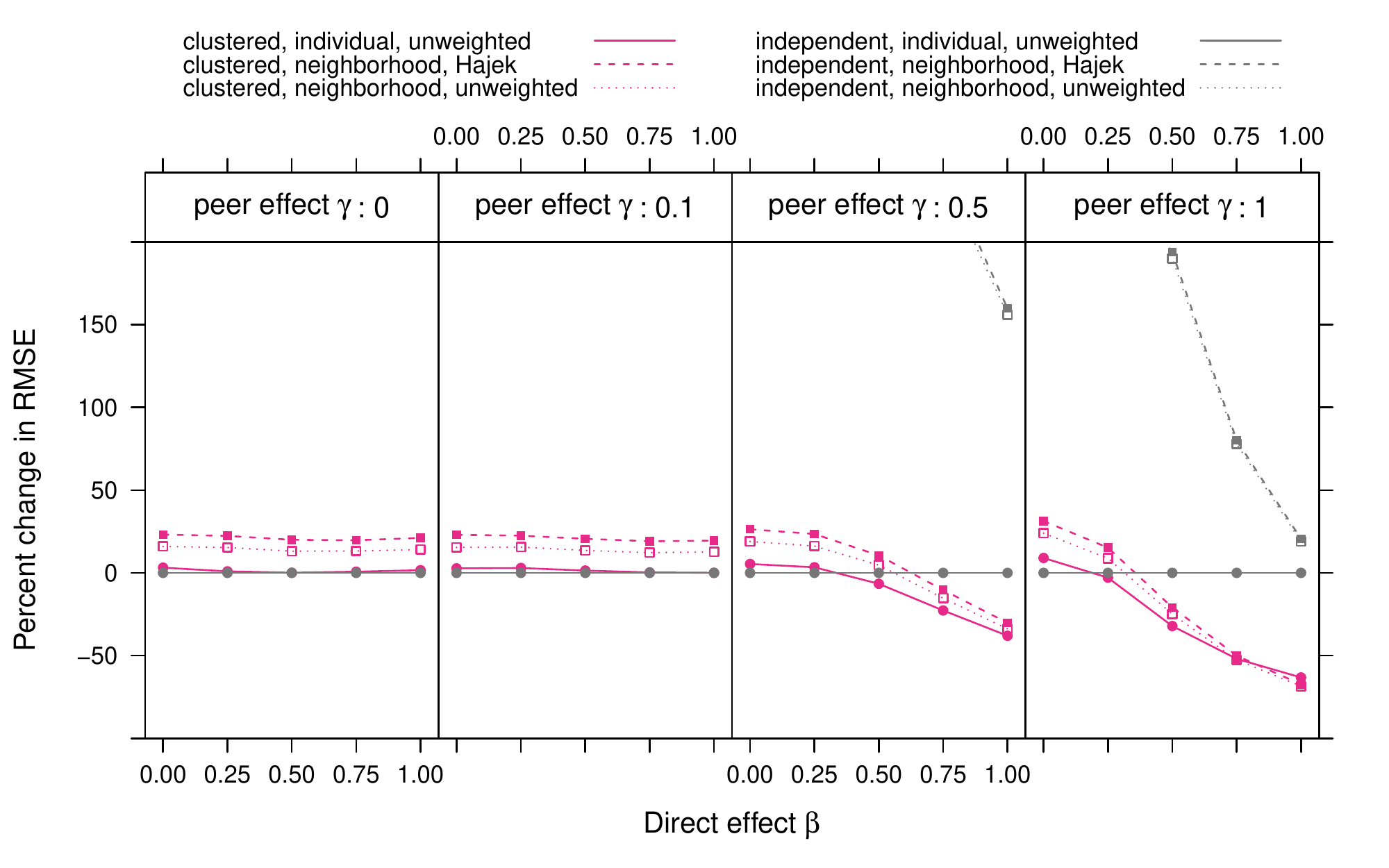}
\caption{Percent change in root-mean-squared-error (RMSE) compared with independent assignment with the simple difference-in-means estimator. Using the neighborhood condition with independent assignment results in large increases in variance: for the two smaller values of $\gamma$, this produces an almost 400\% increase in RMSE. For this reason, the $y$-axis is limited to not show these cases. Rewiring probability $p_\mathrm{rw} = 0.01$.}
\label{exposure_model_rmse}
\end{center}
\end{figure*}

\subsection{Design and analysis}
In addition to changes in design (i.e., graph cluster randomization), we can also use analysis methods intended to account for interference.  We utilize the fractional neighborhood exposure model, which means we only include vertices in the analysis if at least three-quarters of their friends were given the same treatment assignment.\footnote{It is possible for no vertices to meet this condition for treatment or for control. In this case, the estimator is undefined. If this occurs, we expect that experimenters would re-randomize or modify their analysis plan. For the results shown here, we exclude simulations where this occurred, which corresponds to re-randomizing. This did not occur for graph clustered randomization. For independent assignment, this occurred for one of the 5,000 simulations for rewiring probability $p = 0.01$ (i.e., the results shown in Figure~\ref{exposure_model_bias}).}
With this neighborhood exposure model, we consider using propensity score weighting, which corresponds to the Hajek estimator, or ignoring the propensities and using unweighted difference-in-means.  The second estimator has additional bias due to neglecting the propensity-score weights.

Figure \ref{exposure_model_bias} shows several combinations of design 
randomization procedure, exposure model, and estimator.  
We see that using a neighborhood-based definition of effective treatments further 
reduces bias, while the impact of using the Hajek estimator is minimal. 

The low impact of the Hajek estimator follows understandably 
from the fact that small-world graphs do
not exhibit any notable variation in vertex degree, which is the principle
determinant of the propensities used by the Hajek estimator. Thus, for
small-world graphs the weights used by the Hajek estimator are very
close to uniform. With more degree heterogeneity expected in real networks,
the weighting of the Hajek estimator will be more important, 
especially when these heterogeneous propensities are highly correlated with behaviors. 
In general, however, the change in bias from adjusting the analysis 
are not as striking as those from changes due to the experimental design.

Using the neighborhood exposure model means that the estimated average treatment effect
is based on data from fewer vertices, since many vertices may not pass the \emph{a priori} condition.
So the observed modest changes in bias come with increased variance, as reflected in the change in
RMSE compared with independent assignment without using the exposure condition (Figure \ref{exposure_model_rmse}).

\subsection{Results with stochastic blockmodels}

\begin{figure}[tb]
\begin{center}
\subfloat[]{\includegraphics[width=.5\textwidth]{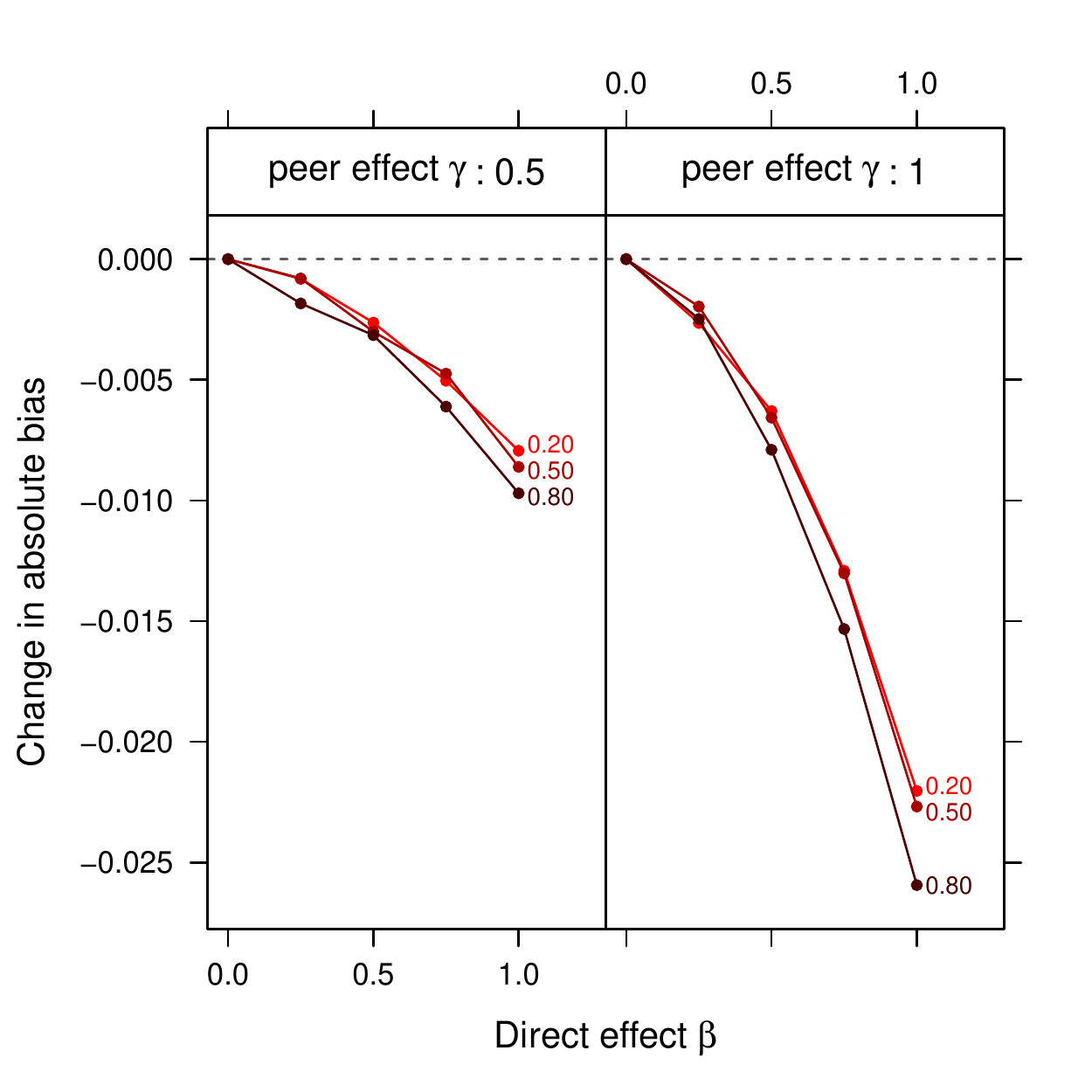}}
\subfloat[]{\includegraphics[width=.5\textwidth]{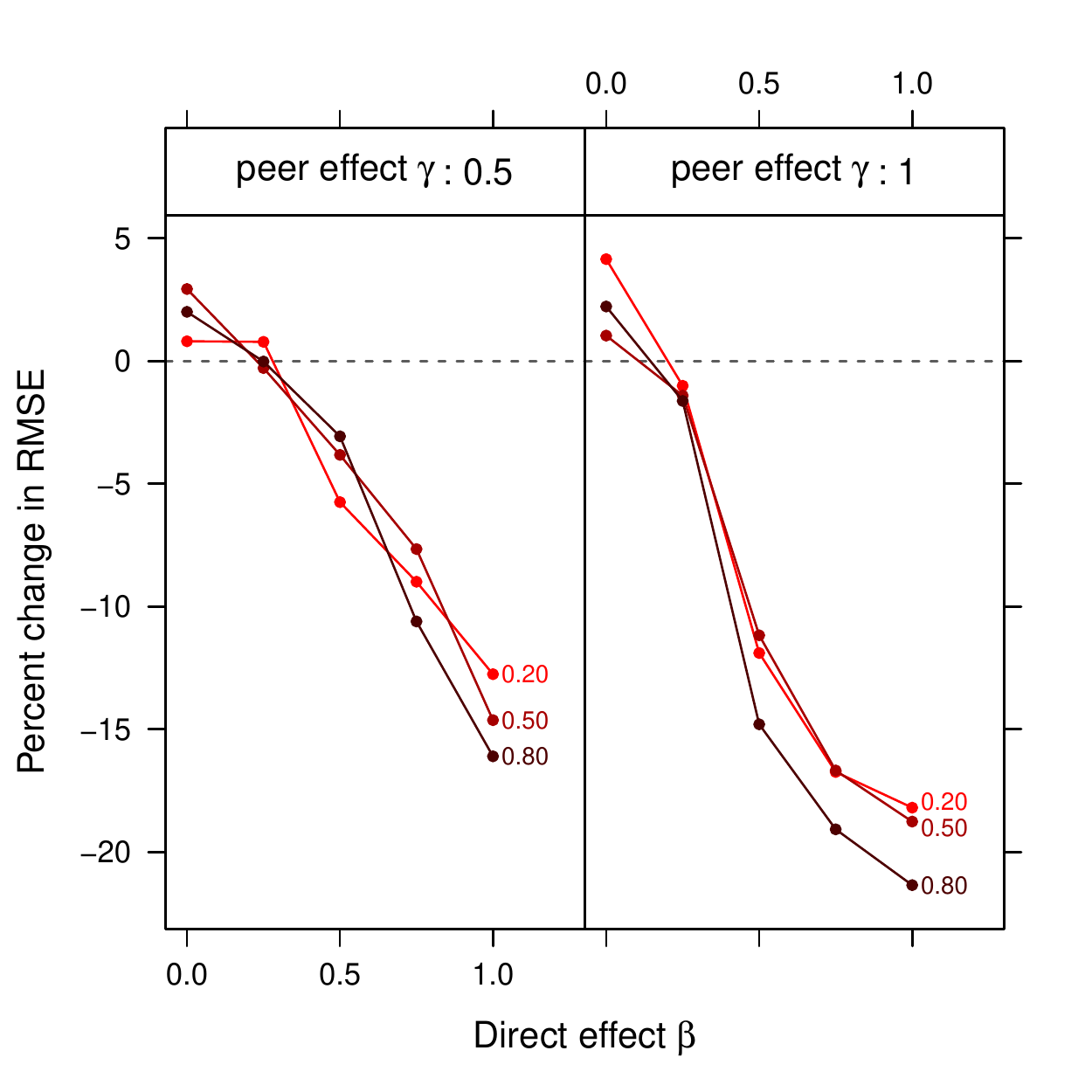}}
\caption{Change in (a) bias and (b) RMSE due to clustered random assignment.
Lines are labeled with the expected proportion of edges that are within a community $p_\text{comm}$.
As before, results vary with the strength of the peer effect $\gamma$,
and the direct effect of the treatment $\beta$.
The largest bias and error reductions here are not as substantial as the largest bias reductions with small-world networks.}
\label{clustering_dcbm}
\end{center}
\end{figure}

\begin{figure*}[tb]
\begin{center}
\includegraphics[width=.7\textwidth]{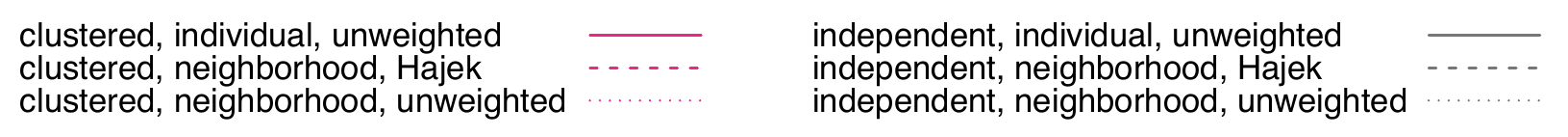} \\ \vspace{-.4cm}
\subfloat[]{\includegraphics[width=.5\textwidth]{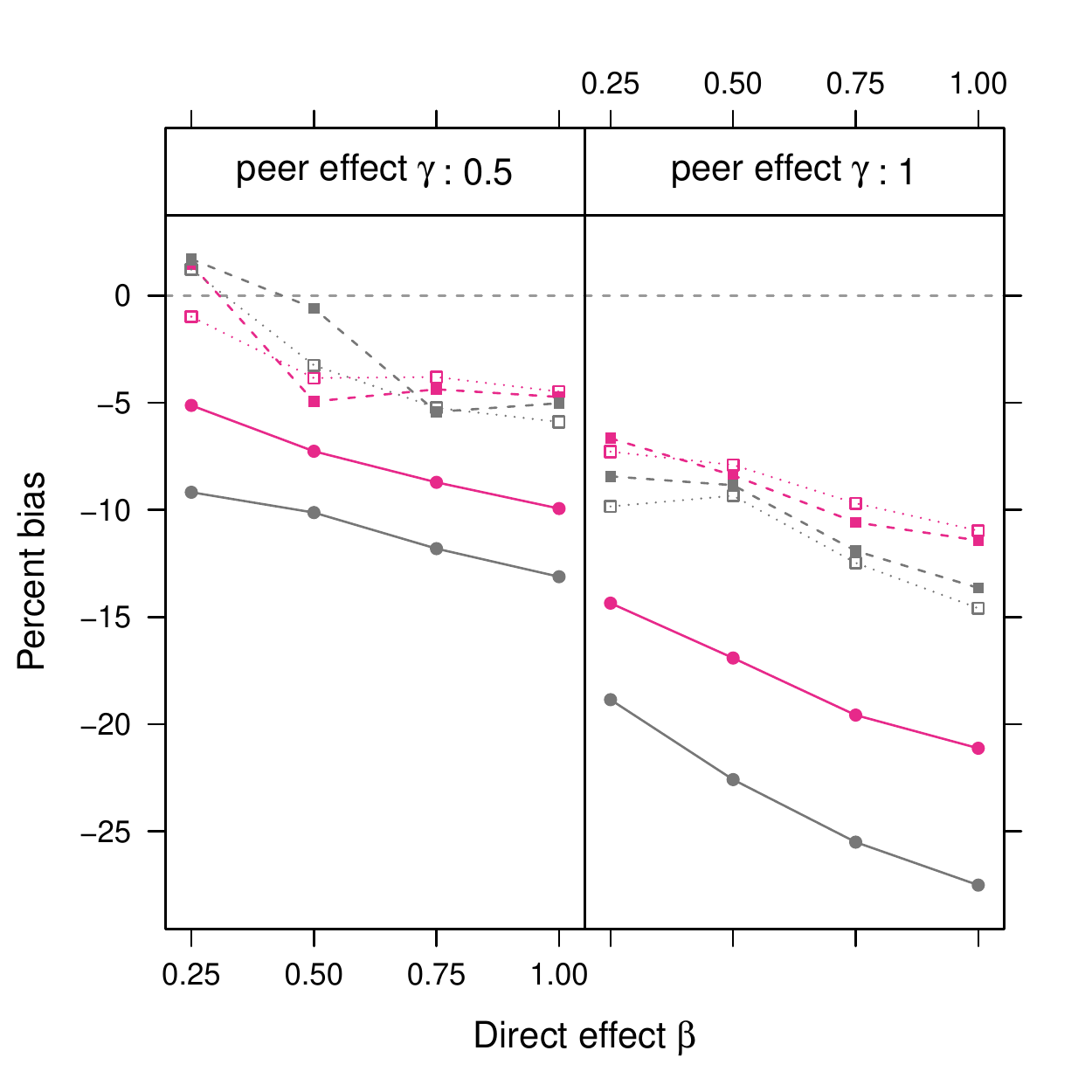}}
\subfloat[]{\includegraphics[width=.5\textwidth]{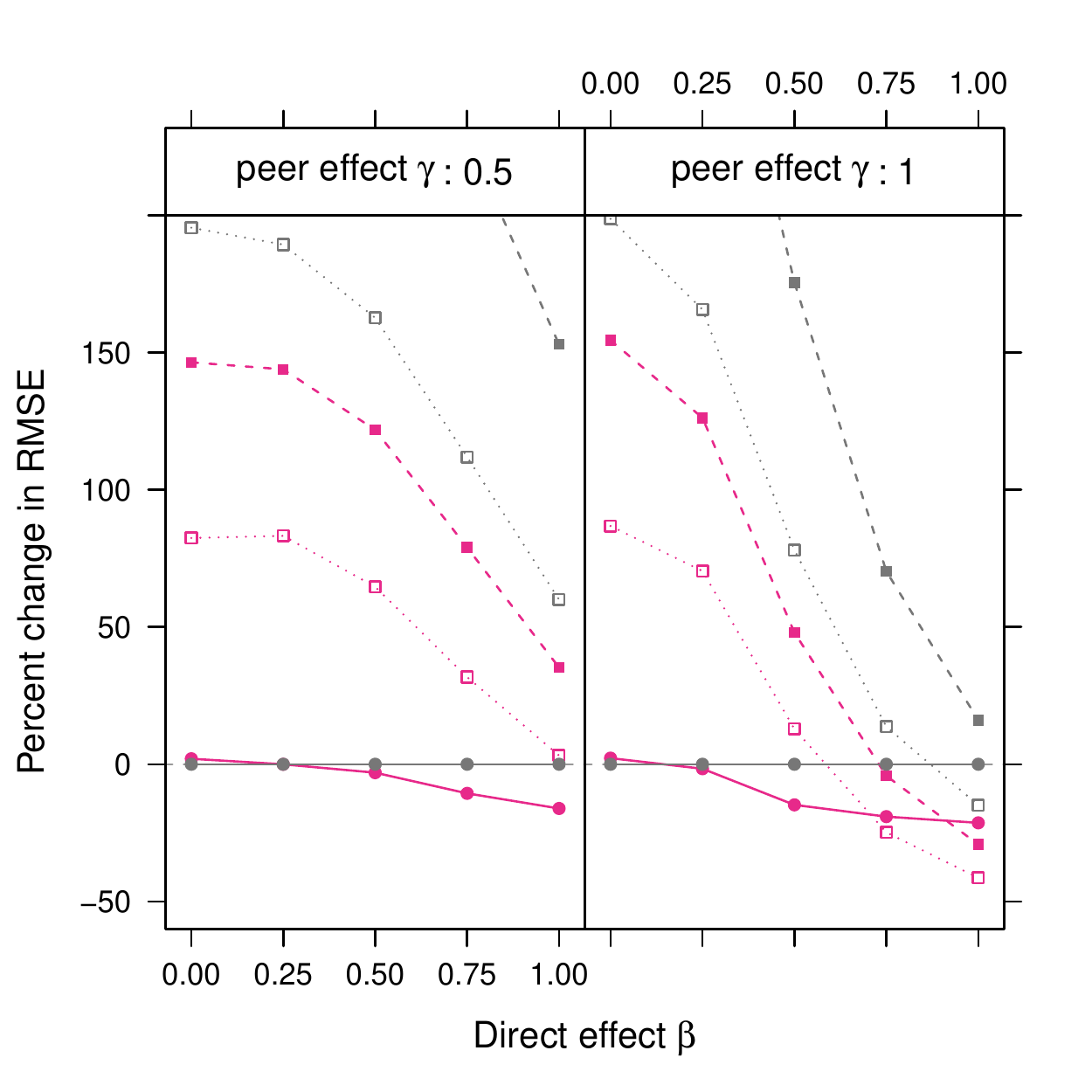}}
\caption{Relative bias (a) and change in RMSE (b) in ATE estimates 
for different assignment procedures, exposure models, and estimation method, 
using the degree-corrected block model with community proportion $p_\text{comm} = 0.8$.
Analysis using the exposure model provides additional bias reduction over using graph cluster randomization only --- with a cost in variance.}
\label{exposure_model_dcbm}
\end{center}
\end{figure*}

As a check on the robustness of these results to the specific choice of network model,
we also conducted simulations with a degree-corrected block model \citep[DCBM;][]{karrer_stochastic_2011},
which provides another way to control the amount of local clustering in a graph and to produce more variation in vertex degree. 

In each simulation, the network is generated according to a degree-corrected block model with 1,000 vertices and 10 communities.
We present results for a subset of the parameter values used with the small-world networks.
Instead of varying the rewiring probability $p_\mathrm{rw}$ to control local clustering, 
we vary the expected proportion of edges that are within a community $p_\text{comm} \in \{0.2, 0.5, 0.8\}$
where vertices are assigned to one of the 10 communities uniformly at random. 
The distribution of expected degrees is a discretized log-normal distribution with mean 10 (as with the small-world networks) and variance 40. 
This produces substantially more variation in degrees than the small-worlds network. Each configuration is repeated 5,000 times.

Figure \ref{clustering_dcbm} displays the change in bias and error that results from graph cluster randomization in these simulations.
The bias and error reduction with the DCBM networks is not as large,
for the same values of other parameters, as with the small world networks. 
We interpret this as a consequence of the presence of higher-degree vertices
and of less local clustering,
even in the simulations with high community proportion (i.e., $p_\text{comm} = 0.8$).\footnote{Note that with $p_\text{comm} = 0.8$
and the chosen degree distribution, the DCBM networks have an average clustering coefficient of approximately 0.095 and average transitivity of approximately 0.091.
This is similar to that of small-world networks with $p_\mathrm{rw} = 0.5$.
This observed bias and error reduction is likewise comparable to that observed with those small-world networks.
}
Qualitative features of these results 
(e.g., bias and error reduction increase with increases in peer effects and increases in clustering)
match those from the small-world networks. 

Figure \ref{exposure_model_dcbm}a displays bias as a function of both design and analysis decisions. As with the small-world networks, estimators making use of the $\lambda$-fractional neighborhood exposure condition reduce bias, whether used with independent or clustered random assignment.
This additional bias reduction comes at the cost of additional variance, such that, in terms of MSE, estimators using the exposure condition are worse for many of the parameter values included in these simulations (Figure \ref{exposure_model_dcbm}b).

\section{Discussion}

Recent work on estimating effects of global treatments in networks through experimentation
has generally started with a particular set of assumptions about patterns of interference,
such as the neighborhood treatment response (NTR) assumption,
that make analysis tractable and then developed estimators with desirable properties (e.g., unbiasedness, consistency) under these assumptions \citep{aronow_estimating_2011,manski_identification_2013}.
Similarly, \citet{ugander_graph_2013} analyzed graph cluster randomization under such assumptions.
Unfortunately, these tractable exposure models are also made implausible by the very processes, such as peer effects, 
that are expected to produce interference in the first place. 
Therefore, we have considered what can be done about bias from interference
when such restrictions on interference cannot be assumed to apply in reality.

The theoretical analysis in this paper offers sufficient, but not necessary, conditions for this bias reduction through design and analysis
in the presence of potentially global interference.
To further evaluate how design and analysis decisions can reduce bias, we reported results from simulation studies 
in which outcomes are produced by a dynamic model that includes peer effects.
These results suggest that when networks exhibit substantial clustering
and there are both substantial direct and indirect (via peer effects) effects of a treatment,
graph cluster randomization can substantially reduce bias with comparatively small increases in variance.
Significant error reduction occurred with networks of only 1,000 vertices,
highlighting the applicability of these results beyond experiments on large networks.
Additional reductions in bias can be achieved through the specific estimators used,
even though these estimators are based on incorrect assumptions about effective treatments.

Further work should examine how these results apply to other networks and data-generating processes.
The theoretical analysis and simulations in this paper used models in which outcomes are monotonic in treatment and peer behavior.
Such models are a natural choice given many substantive theories,
but in other cases vertices will be expected to be less likely to take an action as more peers take that action.
Our simulations did not include vertices characteristics (besides degree) and prior behaviors,
which could play an important role in the bias and variance for different designs and estimators.
Much of the empirical literature that considers peer effects in networks,
whether field experiments \citep[e.g.,][]{aral_creating_2011,bakshy_social_2012,bapna_paid_2012}
or observational studies \citep[e.g.,][]{aral_distinguishing_2009,goldsmith_social_2013}
has aimed to estimate peer effects themselves,
rather than estimating effects of interventions that work partially through peer effects;
a fruitful direction for future work would involve directly modeling the peer effects involved and 
then using these models to estimate effects of global treatments \citep[cf.][]{laan_causal_2014}.\footnote{In the language of contemporary econometrics, one could describe the present work as taking a ``reduced form'' approach to this problem, rather than
trying to learn about the underlying data generating process through estimating ``structural'' parameters.}
This could substantially expand the range of designs and analysis methods to consider.

\subsubsection*{Acknowledgements}
We are grateful for comments from Edoardo Airoldi, Eytan Bakshy, Thomas Barrios, 
Daniel Merl, Cyrus Samii,
and participants in 
the Statistical and Machine Learning Approaches to Network Experimentation Workshop at Carnegie Mellon University,
and in seminars in 
the Stanford Graduate School of Business,
Columbia University's Department of Statistics,
New York University's Department of Politics,
and
UC Davis's Department of Statistics.

\bibliographystyle{imsart-nameyear}
\begin{thebibliography}{33}

\bibitem[\protect\citeauthoryear{Aral, Muchnik and
  Sundararajan}{2009}]{aral_distinguishing_2009}
\begin{barticle}[author]
\bauthor{\bsnm{Aral},~\bfnm{Sinan}\binits{S.}},
  \bauthor{\bsnm{Muchnik},~\bfnm{Lev}\binits{L.}} \AND
  \bauthor{\bsnm{Sundararajan},~\bfnm{Arun}\binits{A.}}
(\byear{2009}).
\btitle{Distinguishing influence-based contagion from homophily-driven
  diffusion in dynamic networks}.
\bjournal{Proceedings of the National Academy of Sciences}
\bvolume{106}
\bpages{21544--21549}.
\bdoi{10.1073/pnas.0908800106}
\end{barticle}
\endbibitem

\bibitem[\protect\citeauthoryear{Aral and Walker}{2011}]{aral_creating_2011}
\begin{barticle}[author]
\bauthor{\bsnm{Aral},~\bfnm{Sinan}\binits{S.}} \AND
  \bauthor{\bsnm{Walker},~\bfnm{Dylan}\binits{D.}}
(\byear{2011}).
\btitle{Creating social contagion through viral product design: A randomized
  trial of peer influence in networks}.
\bjournal{Management Science}
\bvolume{57}
\bpages{1623--1639}.
\end{barticle}
\endbibitem

\bibitem[\protect\citeauthoryear{Aronow and
  Samii}{2014}]{aronow_estimating_2011}
\begin{bunpublished}[author]
\bauthor{\bsnm{Aronow},~\bfnm{{P. M. }}\binits{P.}} \AND
  \bauthor{\bsnm{Samii},~\bfnm{C.}\binits{C.}}
(\byear{2014}).
\btitle{Estimating average causal effects under general interference}.
\bnote{Manuscript}.
\end{bunpublished}
\endbibitem

\bibitem[\protect\citeauthoryear{Bakshy et~al.}{2012}]{bakshy_social_2012}
\begin{binproceedings}[author]
\bauthor{\bsnm{Bakshy},~\bfnm{Eytan}\binits{E.}},
  \bauthor{\bsnm{Eckles},~\bfnm{Dean}\binits{D.}},
  \bauthor{\bsnm{Yan},~\bfnm{Rong}\binits{R.}} \AND
  \bauthor{\bsnm{Rosenn},~\bfnm{Itamar}\binits{I.}}
(\byear{2012}).
\btitle{Social influence in social advertising: Evidence from field
  experiments}.
In \bbooktitle{Proceedings of the {ACM} conference on Electronic Commerce}.
\bpublisher{{ACM}}.
\end{binproceedings}
\endbibitem

\bibitem[\protect\citeauthoryear{Bapna and Umyarov}{2012}]{bapna_paid_2012}
\begin{bunpublished}[author]
\bauthor{\bsnm{Bapna},~\bfnm{Ravi}\binits{R.}} \AND
  \bauthor{\bsnm{Umyarov},~\bfnm{Akhmed}\binits{A.}}
(\byear{2012}).
\btitle{Are paid subscriptions on music social networks contagious? {A}
  randomized field experiment}.
\bnote{Manuscript}.
\end{bunpublished}
\endbibitem

\bibitem[\protect\citeauthoryear{Blondel et~al.}{2008}]{blondel_fast_2008}
\begin{barticle}[author]
\bauthor{\bsnm{Blondel},~\bfnm{Vincent~D}\binits{V.~D.}},
  \bauthor{\bsnm{Guillaume},~\bfnm{Jean-Loup}\binits{J.-L.}},
  \bauthor{\bsnm{Lambiotte},~\bfnm{Renaud}\binits{R.}} \AND
  \bauthor{\bsnm{Lefebvre},~\bfnm{Etienne}\binits{E.}}
(\byear{2008}).
\btitle{Fast unfolding of communities in large networks}.
\bjournal{Journal of Statistical Mechanics: Theory and Experiment}
\bvolume{2008}
\bpages{P10008}.
\end{barticle}
\endbibitem

\bibitem[\protect\citeauthoryear{Blume}{1995}]{blume_statistical_1995}
\begin{barticle}[author]
\bauthor{\bsnm{Blume},~\bfnm{Lawrence~E.}\binits{L.~E.}}
(\byear{1995}).
\btitle{The statistical mechanics of best-response strategy revision}.
\bjournal{Games and Economic Behavior}
\bvolume{11}
\bpages{111--145}.
\bdoi{10.1006/game.1995.1046}
\end{barticle}
\endbibitem

\bibitem[\protect\citeauthoryear{Bond et~al.}{2012}]{bond_massive_2012}
\begin{barticle}[author]
\bauthor{\bsnm{Bond},~\bfnm{Robert~M.}\binits{R.~M.}},
  \bauthor{\bsnm{Fariss},~\bfnm{Christopher~J.}\binits{C.~J.}},
  \bauthor{\bsnm{Jones},~\bfnm{Jason~J.}\binits{J.~J.}},
  \bauthor{\bsnm{Kramer},~\bfnm{Adam D.~I.}\binits{A.~D.~I.}},
  \bauthor{\bsnm{Marlow},~\bfnm{Cameron}\binits{C.}},
  \bauthor{\bsnm{Settle},~\bfnm{Jaime~E.}\binits{J.~E.}} \AND
  \bauthor{\bsnm{Fowler},~\bfnm{James~H.}\binits{J.~H.}}
(\byear{2012}).
\btitle{A 61-million-person experiment in social influence and political
  mobilization}.
\bjournal{Nature}
\bvolume{489}
\bpages{295--298}.
\bdoi{10.1038/nature11421}
\end{barticle}
\endbibitem

\bibitem[\protect\citeauthoryear{Bramoulle, Djebbari and
  Fortin}{2009}]{bramoulle_identification_2009}
\begin{barticle}[author]
\bauthor{\bsnm{Bramoulle},~\bfnm{Yann}\binits{Y.}},
  \bauthor{\bsnm{Djebbari},~\bfnm{Habiba}\binits{H.}} \AND
  \bauthor{\bsnm{Fortin},~\bfnm{Bernard}\binits{B.}}
(\byear{2009}).
\btitle{Identification of peer effects through social networks}.
\bjournal{Journal of Econometrics}
\bvolume{150}
\bpages{41--55}.
\bdoi{10.1016/j.jeconom.2008.12.021}
\end{barticle}
\endbibitem

\bibitem[\protect\citeauthoryear{Cox}{1958}]{cox_planning_1958}
\begin{bbook}[author]
\bauthor{\bsnm{Cox},~\bfnm{D.~R.}\binits{D.~R.}}
(\byear{1958}).
\btitle{Planning of Experiments}.
\bpublisher{Wiley}.
\end{bbook}
\endbibitem

\bibitem[\protect\citeauthoryear{Fortunato}{2010}]{fortunato_community_2010}
\begin{barticle}[author]
\bauthor{\bsnm{Fortunato},~\bfnm{Santo}\binits{S.}}
(\byear{2010}).
\btitle{Community detection in graphs}.
\bjournal{Physics Reports}
\bvolume{486}
\bpages{75--174}.
\end{barticle}
\endbibitem

\bibitem[\protect\citeauthoryear{Fortunato and
  Barthelemy}{2007}]{fortunato_resolution_2007}
\begin{barticle}[author]
\bauthor{\bsnm{Fortunato},~\bfnm{Santo}\binits{S.}} \AND
  \bauthor{\bsnm{Barthelemy},~\bfnm{Marc}\binits{M.}}
(\byear{2007}).
\btitle{Resolution limit in community detection}.
\bjournal{Proceedings of the National Academy of Sciences}
\bvolume{104}
\bpages{36--41}.
\end{barticle}
\endbibitem

\bibitem[\protect\citeauthoryear{Goldsmith-Pinkham and
  Imbens}{2013}]{goldsmith_social_2013}
\begin{barticle}[author]
\bauthor{\bsnm{Goldsmith-Pinkham},~\bfnm{Paul}\binits{P.}} \AND
  \bauthor{\bsnm{Imbens},~\bfnm{Guido~W}\binits{G.~W.}}
(\byear{2013}).
\btitle{Social networks and the identification of peer effects}.
\bjournal{Journal of Business \& Economic Statistics}
\bvolume{31}
\bpages{253--264}.
\end{barticle}
\endbibitem

\bibitem[\protect\citeauthoryear{Holland}{1988}]{holland_causal_1988}
\begin{barticle}[author]
\bauthor{\bsnm{Holland},~\bfnm{Paul~W.}\binits{P.~W.}}
(\byear{1988}).
\btitle{Causal inference, path analysis, and recursive structural equations
  models}.
\bjournal{Sociological Methodology}
\bvolume{18}
\bpages{449--484}.
\end{barticle}
\endbibitem

\bibitem[\protect\citeauthoryear{Hudgens and
  Halloran}{2008}]{hudgens_toward_2008}
\begin{barticle}[author]
\bauthor{\bsnm{Hudgens},~\bfnm{Michael~G}\binits{M.~G.}} \AND
  \bauthor{\bsnm{Halloran},~\bfnm{M~Elizabeth}\binits{M.~E.}}
(\byear{2008}).
\btitle{Toward causal inference with interference}.
\bjournal{Journal of the American Statistical Association}
\bvolume{103}.
\end{barticle}
\endbibitem

\bibitem[\protect\citeauthoryear{Jackson}{2008}]{jackson_social_2008}
\begin{bbook}[author]
\bauthor{\bsnm{Jackson},~\bfnm{Matthew~O.}\binits{M.~O.}}
(\byear{2008}).
\btitle{Social and Economic Networks}.
\bpublisher{Princeton University Press}.
\end{bbook}
\endbibitem

\bibitem[\protect\citeauthoryear{Karrer and
  Newman}{2011}]{karrer_stochastic_2011}
\begin{barticle}[author]
\bauthor{\bsnm{Karrer},~\bfnm{Brian}\binits{B.}} \AND
  \bauthor{\bsnm{Newman},~\bfnm{Mark~EJ}\binits{M.~E.}}
(\byear{2011}).
\btitle{Stochastic blockmodels and community structure in networks}.
\bjournal{Physical Review E}
\bvolume{83}
\bpages{016107}.
\end{barticle}
\endbibitem

\bibitem[\protect\citeauthoryear{Lee}{2007}]{lee_identification_2007}
\begin{barticle}[author]
\bauthor{\bsnm{Lee},~\bfnm{Lung-fei}\binits{L.-f.}}
(\byear{2007}).
\btitle{Identification and estimation of econometric models with group
  interactions, contextual factors and fixed effects}.
\bjournal{Journal of Econometrics}
\bvolume{140}
\bpages{333--374}.
\end{barticle}
\endbibitem

\bibitem[\protect\citeauthoryear{Manski}{1993}]{manski_identification_1993_reflection}
\begin{barticle}[author]
\bauthor{\bsnm{Manski},~\bfnm{Charles~F.}\binits{C.~F.}}
(\byear{1993}).
\btitle{Identification of endogenous social effects: The reflection problem}.
\bjournal{The Review of Economic Studies}
\bvolume{60}
\bpages{531--542}.
\end{barticle}
\endbibitem

\bibitem[\protect\citeauthoryear{Manski}{2000}]{manski_economic_2000}
\begin{barticle}[author]
\bauthor{\bsnm{Manski},~\bfnm{Charles~F.}\binits{C.~F.}}
(\byear{2000}).
\btitle{Economic analysis of social interactions}.
\bjournal{The Journal of Economic Perspectives}
\bvolume{14}
\bpages{115--136}.
\end{barticle}
\endbibitem

\bibitem[\protect\citeauthoryear{Manski}{2013}]{manski_identification_2013}
\begin{barticle}[author]
\bauthor{\bsnm{Manski},~\bfnm{Charles~F}\binits{C.~F.}}
(\byear{2013}).
\btitle{Identification of treatment response with social interactions}.
\bjournal{The Econometrics Journal}
\bvolume{16}
\bpages{S1--S23}.
\end{barticle}
\endbibitem

\bibitem[\protect\citeauthoryear{Newman}{2006}]{newman_modularity_2006}
\begin{barticle}[author]
\bauthor{\bsnm{Newman},~\bfnm{Mark~EJ}\binits{M.~E.}}
(\byear{2006}).
\btitle{Modularity and community structure in networks}.
\bjournal{Proceedings of the National Academy of Sciences}
\bvolume{103}
\bpages{8577--8582}.
\end{barticle}
\endbibitem

\bibitem[\protect\citeauthoryear{Pearl}{2009}]{pearl_causality:_2009}
\begin{bbook}[author]
\bauthor{\bsnm{Pearl},~\bfnm{Judea}\binits{J.}}
(\byear{2009}).
\btitle{Causality: Models, Reasoning and Inference}.
\bpublisher{Cambridge University Press}.
\end{bbook}
\endbibitem

\bibitem[\protect\citeauthoryear{Rosenbaum}{2007}]{rosenbaum_interference_2007}
\begin{barticle}[author]
\bauthor{\bsnm{Rosenbaum},~\bfnm{Paul~R}\binits{P.~R.}}
(\byear{2007}).
\btitle{Interference between units in randomized experiments}.
\bjournal{Journal of the American Statistical Association}
\bvolume{102}.
\end{barticle}
\endbibitem

\bibitem[\protect\citeauthoryear{Rubin}{1974}]{rubin_estimating_1974}
\begin{barticle}[author]
\bauthor{\bsnm{Rubin},~\bfnm{D.~B}\binits{D.~B.}}
(\byear{1974}).
\btitle{Estimating causal effects of treatments in randomized and nonrandomized
  studies}.
\bjournal{Journal of Educational Psychology}
\bvolume{66}
\bpages{688--701}.
\end{barticle}
\endbibitem

\bibitem[\protect\citeauthoryear{Sobel}{2006}]{sobel_randomized_2006}
\begin{barticle}[author]
\bauthor{\bsnm{Sobel},~\bfnm{Michael~E}\binits{M.~E.}}
(\byear{2006}).
\btitle{What do randomized studies of housing mobility demonstrate? {C}ausal
  inference in the face of interference}.
\bjournal{Journal of the American Statistical Association}
\bvolume{101}
\bpages{1398--1407}.
\end{barticle}
\endbibitem

\bibitem[\protect\citeauthoryear{Tchetgen and
  VanderWeele}{2012}]{tchetgen_causal_2012}
\begin{barticle}[author]
\bauthor{\bsnm{Tchetgen},~\bfnm{Eric J~Tchetgen}\binits{E.~J.~T.}} \AND
  \bauthor{\bsnm{VanderWeele},~\bfnm{Tyler~J}\binits{T.~J.}}
(\byear{2012}).
\btitle{On causal inference in the presence of interference}.
\bjournal{Statistical Methods in Medical Research}
\bvolume{21}
\bpages{55--75}.
\end{barticle}
\endbibitem

\bibitem[\protect\citeauthoryear{Toulis and Kao}{2013}]{toulis_estimation_2013}
\begin{binproceedings}[author]
\bauthor{\bsnm{Toulis},~\bfnm{Panos}\binits{P.}} \AND
  \bauthor{\bsnm{Kao},~\bfnm{Edward}\binits{E.}}
(\byear{2013}).
\btitle{Estimation of causal peer influence effects}.
In \bbooktitle{Proceedings of The 30th International Conference on Machine
  Learning}
\bpages{1489--1497}.
\end{binproceedings}
\endbibitem

\bibitem[\protect\citeauthoryear{Ugander et~al.}{2011}]{ugander_anatomy_2011}
\begin{bunpublished}[author]
\bauthor{\bsnm{Ugander},~\bfnm{Johan}\binits{J.}},
  \bauthor{\bsnm{Karrer},~\bfnm{Brian}\binits{B.}},
  \bauthor{\bsnm{Backstrom},~\bfnm{Lars}\binits{L.}} \AND
  \bauthor{\bsnm{Marlow},~\bfnm{Cameron}\binits{C.}}
(\byear{2011}).
\btitle{The anatomy of the {F}acebook social graph}.
\bnote{Technical report. http://arxiv.org/abs/1111.4503}.
\end{bunpublished}
\endbibitem

\bibitem[\protect\citeauthoryear{Ugander et~al.}{2013}]{ugander_graph_2013}
\begin{binproceedings}[author]
\bauthor{\bsnm{Ugander},~\bfnm{Johan}\binits{J.}},
  \bauthor{\bsnm{Karrer},~\bfnm{Brian}\binits{B.}},
  \bauthor{\bsnm{Backstrom},~\bfnm{Lars}\binits{L.}} \AND
  \bauthor{\bsnm{Kleinberg},~\bfnm{Jon~M.}\binits{J.~M.}}
(\byear{2013}).
\btitle{Graph cluster randomization: {N}etwork exposure to multiple universes}.
In \bbooktitle{Proc. of KDD}.
\bpublisher{{ACM}}.
\end{binproceedings}
\endbibitem

\bibitem[\protect\citeauthoryear{van~der Laan}{2014}]{laan_causal_2014}
\begin{barticle}[author]
\bauthor{\bparticle{van~der} \bsnm{Laan},~\bfnm{Mark~J}\binits{M.~J.}}
(\byear{2014}).
\btitle{Causal inference for a population of causally connected units}.
\bjournal{Journal of Causal Inference}
\bpages{1--62}.
\end{barticle}
\endbibitem

\bibitem[\protect\citeauthoryear{Watts and
  Strogatz}{1998}]{watts_collective_1998}
\begin{barticle}[author]
\bauthor{\bsnm{Watts},~\bfnm{D~J}\binits{D.~J.}} \AND
  \bauthor{\bsnm{Strogatz},~\bfnm{S~H}\binits{S.~H.}}
(\byear{1998}).
\btitle{Collective dynamics of `small-world' networks}.
\bjournal{Nature}
\bvolume{393}
\bpages{440--2}.
\end{barticle}
\endbibitem

\bibitem[\protect\citeauthoryear{Young}{1998}]{young_individual_1998}
\begin{bbook}[author]
\bauthor{\bsnm{Young},~\bfnm{H.~Peyton}\binits{H.~P.}}
(\bye